\providecommand{\U}[1]{\protect\rule{.1in}{.1in}}
\newtheorem{theorem}{Theorem}
\newtheorem{corollary}[theorem]{Corollary}
\newtheorem{definition}[theorem]{Definition}
\newtheorem{example}[theorem]{Example}
\newtheorem{lemma}[theorem]{Lemma}
\newtheorem{proposition}[theorem]{Proposition}
\newtheorem{remark}[theorem]{Remark}
\def\squareforqed{\hbox{\rlap{$\sqcap$}$\sqcup$}}
\def\qed{\ifmmode\squareforqed\else{\unskip\nobreak\hfil
\penalty50\hskip1em\null\nobreak\hfil\squareforqed
\parfillskip=0pt\finalhyphendemerits=0\endgraf}\fi}
\def\endenv{\ifmmode\;\else{\unskip\nobreak\hfil
\penalty50\hskip1em\null\nobreak\hfil\;
\parfillskip=0pt\finalhyphendemerits=0\endgraf}\fi}
\newenvironment{proof}{\noindent \textbf{{Proof~} }}{\qed}
\mathchardef\ordinarycolon\mathcode`\:
\def\vcentcolon{\mathrel{\mathop\ordinarycolon}}
\newcommand{\nc}{\newcommand}
\nc{\rnc}{\renewcommand}
\nc{\beq}{\begin{equation}}
\nc{\eeq}{\end{equation}}
\nc{\beqa}{\begin{eqnarray}}
\nc{\eeqa}{\end{eqnarray}}
\nc{\lbar}[1]{\overline{#1}}
\nc{\bra}[1]{\langle#1|}
\nc{\ket}[1]{|#1\rangle}
\nc{\ketbra}[2]{|#1\rangle\!\langle#2|}
\nc{\braket}[2]{\langle#1|#2\rangle}
\nc{\proj}[1]{| #1\rangle\!\langle #1 |}
\nc{\avg}[1]{\langle#1\rangle}
\nc{\Rank}{\operatorname{Rank}}
\nc{\smfrac}[2]{\mbox{$\frac{#1}{#2}$}}
\nc{\tr}{\operatorname{Tr}}
\nc{\ox}{\otimes}
\nc{\dg}{\dagger}
\nc{\dn}{\downarrow}
\nc{\cA}{{\cal A}}
\nc{\cB}{{\cal B}}
\nc{\cC}{{\cal C}}
\nc{\cD}{{\cal D}}
\nc{\cE}{{\cal E}}
\nc{\cF}{{\cal F}}
\nc{\cG}{{\cal G}}
\nc{\cH}{{\cal H}}
\nc{\cI}{{\cal I}}
\nc{\cJ}{{\cal J}}
\nc{\cK}{{\cal K}}
\nc{\cL}{{\cal L}}
\nc{\cM}{{\cal M}}
\nc{\cN}{{\cal N}}
\nc{\cO}{{\cal O}}
\nc{\cP}{{\cal P}}
\nc{\cR}{{\cal R}}
\nc{\cS}{{\cal S}}
\nc{\cT}{{\cal T}}
\nc{\cX}{{\cal X}}
\nc{\cY}{{\cal Y}}
\nc{\cZ}{{\cal Z}}
\nc{\csupp}{{\operatorname{csupp}}}
\nc{\qsupp}{{\operatorname{qsupp}}}
\nc{\var}{{\operatorname{var}}}
\nc{\rar}{\rightarrow}
\nc{\lrar}{\longrightarrow}
\nc{\polylog}{{\operatorname{polylog}}}
\nc{\wt}{{\operatorname{wt}}}
\nc{\av}[1]{{\left\langle {#1} \right\rangle}}
\def\ll{\lambda}
\def\U{\Upsilon}
\nc{\RR}{{{\mathbb R}}}
\nc{\CC}{{{\mathbb C}}}
\nc{\FF}{{{\mathbb F}}}
\nc{\NN}{{{\mathbb N}}}
\nc{\ZZ}{{{\mathbb Z}}}
\nc{\PP}{{{\mathbb P}}}
\nc{\QQ}{{{\mathbb Q}}}
\nc{\UU}{{{\mathbb U}}}
\nc{\EE}{{{\mathbb E}}}
\nc{\id}{{\operatorname{id}}}
\nc{\CHSH}{{\operatorname{CHSH}}}
\nc{\be}{\begin{equation}}
\nc{\ee}{{\end{equation}}}
\nc{\bea}{\begin{eqnarray}}
\nc{\eea}{\end{eqnarray}}
\nc{\Hom}[2]{\mbox{Hom}(\CC^{#1},\CC^{#2})}
\nc{\rU}{\mbox{U}}
\nc{\ob}[1]{#1}
\def\id{\operatorname{id}}
\begin{document}

\title{\textbf{Weak locking capacity of quantum channels \protect\\ can be much larger than private capacity}}
\author{Andreas Winter\\
\textit{{\small {ICREA \& F\'{\i}sica Te\`{o}rica: Informaci\'{o} i Fenomens Qu\`{a}ntics}}}\\
\textit{{\small {Universitat Aut\`{o}noma de Barcelona,}}} 
\textit{{\small {ES-08193 Bellaterra (Barcelona), Spain}}}\\
\texttt{{\small {<andreas.winter@uab.cat>}}}
}

\date{20 July 2015}

\maketitle

\begin{abstract}
  We show that it is possible for the so-called \emph{weak locking capacity} 
  of a quantum channel 
  [Guha \emph{et al.}, \emph{Phys.~Rev.~X} {\bf 4}:011016, 2014] 
  to be much larger than its private capacity. Both reflect different ways
  of capturing the notion of reliable communication via a quantum
  system while leaking almost no information to an eavesdropper; the
  difference is that the latter imposes an intrinsically quantum security
  criterion whereas the former requires only a weaker, classical 
  condition.
  The channels for which this separation is most straightforward to
  establish are the complementary channels of classical-quantum
  (cq-)channels, and hence a subclass of Hadamard channels.
  We also prove that certain symmetric channels (related to photon number
  splitting) have positive weak locking capacity in the presence of
  a vanishingly small pre-shared secret, whereas their private capacity is zero.
  
  These findings are powerful illustrations of the difference between
  two apparently natural notions of privacy in quantum systems, relevant
  also to quantum key distribution (QKD): 
  the older, na\"{i}ve one based on accessible information, contrasting
  with the new, composable one embracing the quantum nature of the 
  eavesdropper's information.
  
  Assuming an additivity conjecture for constrained minimum output
  R\'{e}nyi entropies, the techniques of the first part demonstrate
  a single-letter formula for the weak locking capacity of complements
  to cq-channels, coinciding with a general upper bound of Guha \emph{et al.}
  for these channels.
  Furthermore, still assuming this additivity conjecture, this upper bound 
  is given an operational interpretation for general channels as the maximum 
  weak locking capacity of the channel activated by a suitable noiseless channel.
\end{abstract}

\bigskip


\section{Introduction}
\label{sec:intro}
Information locking~\cite{I-locking} remains one of the most 
curious manifestations of the quantum nature of information,
which is in contrast to our (human) exclusively classical 
access to it. It is based on the simple (yet nontrivial) 
observation that the accessible information in a non-orthogonal
ensemble through a measurement can be smaller, indeed much smaller,
than the Holevo information. This occurs already for two
mutually unbiased bases, by the Maassen-Uffink entropic uncertainty
relation~\cite{MaassenUffink}, and the crucial realization is that
availability of the basis information (one bit) before the
measurement is made can raise the accessible information by
an arbitrary amount, depending on the system size.

This throws into sharp contrast two security criteria for
quantum cryptography: the ``na\"{i}ve'' one, which only asks for
the eavesdropper to have small accessible information about
the key, and the ``correct'', composable one, which demands that the
quantum mutual information is small~\cite{KRBM-locking}. In
fact, in~\cite{KRBM-locking} it was shown that this choice
can make a big difference: a large key may appear private 
according the former criterion, but not under the latter. 

Quantifying this difference, Guha \emph{et al.}~have recently
introduced the notion of locking capacity of a channel~\cite{Guha-et-al:enigma},
following Lloyd's suggestion of ``quantum enigma machines''~\cite{Lloyd-aenigma};
actually two capacities, 
one \emph{strong}, and one \emph{weak locking capacity} 
of a channel. Here we will only look at the weak variant,
which is the largest rate of asymptotically reliable classical communication
between the ``legal'' users (Alice and Bob), such that the accessible 
information of the eavesdropper observing the channel environment
(complementary channel output), about a uniformly distributed message,
goes to zero.

\medskip
To be precise, let Alice and Bob be connected by a quantum channel,
\emph{i.e.}~a completely positive and trace preserving (cptp) map
$\cN:\cL(A) \longrightarrow \cL(B)$, with (here: finite dimensional)
Hilbert spaces $A$ and $B$. It has a Stinespring dilation, via an
essentially unique isometry $V:A \hookrightarrow B \ox E$, where
$E$ is the eavesdropper's system (Eve):
$\cN(\rho) = \tr_E V\rho V^\dagger$. Tracing over $B$ instead yields
the \emph{complementary channel} $\cN^c(\rho) = \tr_B V\rho V^\dagger$
from Alice to Eve. All of our discussion of privacy will be in this
model, which is a quantum version of Wyner's wiretap channel~\cite{Wyner:wiretap}.

To communicate via $n$ instances of the channel, Alice and Bob employ
an \emph{$(n,\epsilon)$-code}, which is a collection 
$\{(\rho_m,D_m):m=1,\ldots,N\}$ consisting of states $\rho_m$ on $A^n$
and POVM elements $D_m$ on $B^n$ (\emph{i.e.}~$D_m \geq 0$, $\sum_m D_m = \1$),
with the property that the estimate $\widehat{m}$ of $m$ obtained by
measuring $(D_m)$ on the channel output is very likely to equal $m$,
which is assumed to be drawn uniformly:
\begin{equation}
  \label{eq:n-epsilon-code}
  P_{\text{err}} = \Pr\{M\neq\widehat{M}\}
                 = \frac{1}{N} \sum_{m=1}^N \tr\bigl( \cN^{\ox n}(\rho_m) (\1-D_m) \bigr)
                 \leq \epsilon.
\end{equation}
Given a code, we call it \emph{$\delta$-private} (for the channel $\cN$)
if there exists a state $\omega_0$ on $E^n$ such that
\begin{equation}
  \label{eq:delta-private}
  \frac{1}{N} \sum_{m=1}^N \bigl\| (\cN^c)^{\ox n}(\rho_m) - \omega_0 \bigr\|_1 \leq \delta.
\end{equation}
This condition captures precisely the commonly accepted notion of
private communication, since it says that Eve's output is typically
close to a constant, independent of the message $m$. Strictly speaking,
the above notion is that of a \emph{secret key generation code},
since we impose an a priori uniform distribution on the $m$'s.

Finally, the code is called \emph{$\delta$-weakly-locked} (always for the 
same channel $\cN$), if for every POVM $(Q_j)$ on $E^n$ there exists a probability 
distribution $\Omega = (\Omega_j)$ such that
\begin{equation}
  \label{eq:delta-weakly-locked}
  \frac{1}{N} \sum_{m=1}^N \sum_j \left| \tr \bigl[(\cN^c)^{\ox n}(\rho_m)Q_j\bigr] - \Omega_j \right| \leq \delta.
\end{equation}
By the contractive property of the trace norm under cptp maps (in this
case $\sigma \mapsto \sum_j \proj{j} \tr\sigma Q_j$), the $\delta$-private
property implies $\delta$-weak-locking. Guha \emph{et al.}~\cite{Guha-et-al:enigma}
have also defined the notion of \emph{strong locking}, which boils down to the
set of signal states $\rho_m$ satisfying eq.~(\ref{eq:delta-weakly-locked})
for the identity channel $\cN^c$, \emph{i.e.}~constant channel $\cN$:
\begin{equation}
  \label{eq:delta-strongly-locked}
  \frac{1}{N} \sum_{m=1}^N \sum_j \bigl| \tr \rho_m Q_j - \Omega_j \bigr| \leq \delta.
\end{equation}
However, we shall not prove any new results on strongly locked codes,
and include the definition only for completeness, see however recent
progress in~\cite{LupoLloyd}.

Following~\cite{Guha-et-al:enigma}, the code in fact may depend on 
a sublinear secret key $k$ (\emph{i.e.}~of $o(n)$ bits) pre-shared between 
Alice and Bob, so that in eq.~(\ref{eq:n-epsilon-code}) the states $\rho_{mk}$
and POVMs elements $D_{mk}$ depend on $m$ and $k$, and the error probability
includes also an average over $k$. On the other hand, in the privacy
and weak-locking conditions, eqs.~(\ref{eq:delta-private}) and (\ref{eq:delta-weakly-locked}),
we have to put $\rho_m = \EE_k \rho_{mk}$, the average over the key $k$, 
because it is unknown to Eve. 
An equivalent way of including the pre-shared key, which we
prefer here as it allows us to keep the above definitions of wiretap 
channel codes, is to grant Alice and Bob the use of $o(n)$ instances 
of an ideal qubit channel  (which automatically is perfectly private) 
in addition to the $n$ instances of $\cN$. Then all we have to do is to 
substitute $\id_2^{\ox o(n)}\!\ox\,\cN^{\ox n}$
for the main channel in eqs.~(\ref{eq:n-epsilon-code}), (\ref{eq:delta-private})
and (\ref{eq:delta-weakly-locked}) above.

\medskip
With these notions, we can give the definitions of channel capacities
as the largest asymptotic rate $R = \frac1n \log N$ attainable with arbitrarily
small error:
\begin{align*}
  P(\cN)   &:= \sup \left\{ R : \exists\ \delta\text{-private } (n,\epsilon)\text{-codes with } 
                                          N \geq 2^{nR},\ \epsilon,\delta \rightarrow 0 \right\}, \\
  L_W(\cN) &:= \sup \left\{ R : \exists\ \delta\text{-weakly-locked } (n,\epsilon)\text{-codes with } 
                                          N \geq 2^{nR},\ \epsilon,\delta \rightarrow 0 \right\}, \\
  L_S(\cN) &:= \sup \left\{ R : \exists\ \delta\text{-strongly-locked } (n,\epsilon)\text{-codes with } 
                                          N \geq 2^{nR},\ \epsilon,\delta \rightarrow 0 \right\},
\end{align*}
are the private, weak locking and strong locking capacity, respectively.
By definition, $L_S(\cN) \leq L_W(\cN)$ and $P(\cN) \leq L_W(\cN) \leq C(\cN)$,
the latter being the classical capacity of $\cN$. It can be shown that
the quantum capacity $Q(\cN)$ is a lower bound on $L_S(\cN)$, but
the relation between $P(\cN)$ and $L_S(\cN)$ is unknown~\cite{Guha-et-al:enigma}.

In cryptographic contexts, we would also worry about the speed of 
convergence of $\epsilon$ and $\delta$, usually by introducing exponential
decay rates, $\epsilon = 2^{-nE}$, $\delta = 2^{-nS}$ ($S>0$ is
called a \emph{security parameter}), in which case we would have to
study the tradeoff between rate $R$ and the error/security rates $E$ and $S$.
The private capacity $P(\cN)$ has been determined in~\cite{Devetak:P+Q,CaiWinterYeung:P}
and from the proof we know that by letting $E>0$ and $S>0$ sufficiently
small, rates arbitrarily close to $P(\cN)$ can be achieved. A priori
this is not clear for the locking capacities, although the results presented
in this paper show that at least certain weak locking rates, sometimes 
even rates arbitrarily close to $L_W(\cN)$, can be achieved with 
$\epsilon,\,\delta = 2^{-\Omega(n)}$.

\begin{remark}
  \normalfont 
  Guha \emph{et al.}~\cite[Def.~1]{Guha-et-al:enigma} give a very similar
  definition of locking, but demand the much stronger condition that
  in eq.~(\ref{eq:delta-weakly-locked}) the \emph{conditional}
  distribution of $m$ given each outcome $j$ has to be close to
  uniform. This however seems too restrictive, and not in line with the
  usual modern definition of privacy in wiretap 
  channels~\cite{Devetak:P+Q,CaiWinterYeung:P}, reflected in 
  eq.~(\ref{eq:delta-private}); in fact, that definition would assign
  a private capacity of zero to the perfectly innocent and well-understood
  quantum erasure channel~\cite{Smith}.
  Thus we propose to use our criterion (\ref{eq:delta-weakly-locked}).
  
  Guha \emph{et al.}~\cite{Guha-et-al:enigma}
  also discuss the possibility of defining the weak locking 
  property in terms of the Shannon mutual information between $m$ and 
  $j$ in eq.~(\ref{eq:delta-weakly-locked}), the maximum of which over
  all measurements is the \emph{accessible information} 
  \[
    I_{\text{acc}}(M:E^n) = I_{\text{acc}}\left(\left\{\frac1M,(\cN^c)^{\ox n}(\rho_m)\right\}\right)
  \]
  of the uniform ensemble of the eavesdropper's output states.
  Likewise, the privacy of a code could also have been characterized
  in terms of the quantum mutual information $I(M:E^n)$, which equals 
  the Holevo information of the ensemble 
  $\left\{\frac1M,(\cN^c)^{\ox n}(\rho_m)\right\}$. 
  For a generic ensemble $\cE = \{p_x,\sigma_x\}$ of states on $A$, 
  and corresponding cq-state $\sum_x p_x \proj{x}^X \ox \sigma_x^A$, 
  these information quantities are defined as 
  \begin{align*}
    I(X:A) &= S(X)+S(A)-S(XA) = S(A)-S(A|X) \\
           &= \chi(\cE) = S\left(\sum_x p_x\sigma_x\right) - \sum_x p_x S(\sigma_x), \\
    I_{\text{acc}}(X:A)
           &= I_{\text{acc}}(\cE)
            = \max_{\text{POVM }(Q_y)} I(X:Y) \ \text{ with }\ \Pr\{X=x,Y=y\} = p_x\tr\sigma_x Q_y. 
  \end{align*}
  
  By the Alicki-Fannes inequality~\cite{Alicki-Fannes}, a $\delta$-private
  code satisfies $I(M:E^n) \leq O(n)\delta$, and likewise a $\delta$-weakly-locking
  code satisfies $I_{\text{acc}}(M:E^n) \leq O(n)\delta$.
  Vice versa, Pinsker's inequality 
  implies that $I(M:E^n) \leq \Delta$
  and $I_{\text{acc}}(M:E^n) \leq \Delta$ imply $\sqrt{2\Delta}$-privacy
  and $\sqrt{2\Delta}$-weak-locking, respectively (similarly for strong locking). 
  Hence, as long as $\delta$ in our definitions above is $o(1/n)$, 
  the resulting notions of weak and strong locking, as well as private, capacity,
  are equivalent to the present ones.
\end{remark}

\medskip
In the present paper, 
we shall take a closer look at the weak locking capacity for
so-called \emph{degradable} channels $\cN$, which means that
there is a cptp map $\cD:\cL(B) \longrightarrow \cL(E)$ satisfying
$\cN^c = \cD \circ \cN$. We call $\cN$ \emph{anti-degradable} iff
the complementary channel $\cN^c$ is degradable. If a channel
$\cN$ is both degradable and anti-degradable, and specifically
if the degrading map $\cD$ is an isomorphism between $B$ and $E$, 
we call it \emph{symmetric}.

\begin{remark}
  \normalfont
  For degradable channels $\cN$, it is well-known~\cite{Devetak-Shor,Smith} 
  that
  \[
    P(\cN) = Q(\cN) = \max_{\rho} S(\cN(\rho)) - S(\cN^c(\rho)),
  \]
  where the right hand side is the maximization of the coherent
  information, which is concave in $\rho$.
  By definition, any private communication code is a weak locking code
  for $\cN$, hence $L_W(\cN) \geq P(\cN)$.
  
  For an anti-degradable channel $\cN$, $Q(\cN) = P(\cN) = 0$
  by the familiar ``cloning argument''~\cite{BDSW:bigpaper}. 
  Below we will see that for the weak locking capacity this does not hold.
\end{remark}

\medskip
In~\cite{Guha-et-al:enigma} it had been left open whether the weak
locking capacity is always equal to the private capacity, or whether
there can be a separation.
For example, there it was shown that for channels
$\cN$ such that the complementary channel $\cN^c$ is a qc-channel, 
then $L_W(\cN)=P(\cN)$; furthermore, that if $\cN$ is entanglement-breaking, 
then $L_W(\cN)=P(\cN)=0$.
Note that the construction in~\cite{KRBM-locking} (as well 
as~\cite{ChristandlEHHOR}) may be taken as 
evidence for large gaps, but it is not sufficient to prove this:
Namely, in those papers it was pointed out that \emph{if} at the end
of a hypothetical key agreement protocol Alice and Bob share perfect
randomness, and their correlation with Eve is described as
\[
  \frac1N \sum_{m=1}^N \proj{m}^A \ox \proj{m}^B \ox \rho_m^E,
\]
with a strongly-locking ensemble $\left\{\frac1N,\rho_m\right\}$,
cf.~eq.~(\ref{eq:delta-strongly-locked}), then the key may not
be secure at all after a small portion ($\ll \log N$) of the 
shared secret has been leaked. Our contribution is to show that 
this can indeed occur naturally in the above outlined setting of 
the quantum wiretap channel.

Here we show a general lower bound on $L_W(\cN)$ for channels
$\cN$ such that the complementary channel $\cN^c$ is a cq-channel
(these are automatically degradable);
we establish basic properties of these channels, including an upper bound
on $L_W(\cN)$, in the next Section~\ref{sec:cq-channels}.
This bound can sometimes be much larger than $P$ (Section~\ref{sec:main}). 
We also exhibit symmetric channels, hence with vanishing private
capacity, which nonetheless have positive weak locking capacity
(Section~\ref{sec:P=0}). After this we conclude with a discussion
of our results in the context of regular quantum key distribution (QKD)
and several open questions, in Section~\ref{sec:further}.

\section{Complements of cq-channels}
\label{sec:cq-channels}
One subclass we'll be interested in are so-called \emph{Hadamard channels}~\cite{hadamard},
specifically those that are complementary channels of 
\emph{cq-channels}~\cite{Holevo:1977}:
\begin{equation}
  \label{eq:cq-channel}
  \cN^c(\ketbra{i}{i'}) = \delta_{ii'} \rho_i^E,
\end{equation}
where $\{ \ket{i} \}$ is an orthonormal basis of $A$, and
with $\rho_i^E = \tr_B \proj{\psi_i}^{BE}$, so that
\begin{equation}
  \label{eq:our-channel}
  \cN(\ketbra{i}{i'}) = \ketbra{i}{i'} \otimes \tr_E \ketbra{\psi_i}{\psi_{i'}}.
\end{equation}
Note that in general, Hadamard channels are defined as complementary
channels of entanglement-breaking channels, which results in a wider
class than the ones we are looking at here~\cite{hadamard}. The more
restrictive class of channels in eq.~(\ref{eq:our-channel}) are also known 
as \emph{Schur multipliers}.

The cq-channels are called so, because they are ``classical-to-quantum''~\cite{Holevo:1977}. 
The opposite concept of \emph{qc-channel} (``quantum-to-classical'')
models a measurement as a cptp map; for a POVM $(Q_j)$, it is given by
\begin{equation}
  \label{eq:qc-channel}
  \cN(\rho) = \sum_j \tr\rho Q_j \proj{j}.
\end{equation}

We begin with an upper bound on the weak locking capacity,
to have a benchmark for our lower bound later on.

\begin{proposition}
  \label{prop:upper-bound}
  Let $\cN:\cL(A)\longrightarrow\cL(B)$ be a Schur multiplier, i.e.~a Hadamard channel whose
  complementary channel $\cN^c:\cL(A)\longrightarrow\cL(E)$ is a cq-channel. 
  Then,
  \begin{align*}
    L_W(\cN)            &\leq \max_{(p_i)} S_{\text{acc}}(I|E), \text{ where}\\ 
    S_{\text{acc}}(I|E) &:=   \min_{(Q_j)} H(I|J),
  \end{align*}
  is the \emph{eavesdropper's accessible equivocation}. Here, $(p_i)$ is a 
  probability distribution on the computational basis states of $A$ and
  $(Q_j)$ is a POVM on $E$, $\Pr\{I=i,J=j\} = p_i\tr\rho_i Q_j$.
\end{proposition}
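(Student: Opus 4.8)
The plan is to take an arbitrary $\delta$-weakly-locked $(n,\epsilon)$-code with $N = 2^{nR}$ messages (allowing a pre-shared secret key, handled as $o(n)$ free qubit channels appended to $\cN$), and squeeze $R$ from above by the accessible equivocation of the eavesdropper. The starting point is Fano's inequality on Bob's side: since $P_{\text{err}} \leq \epsilon$, decoding $\widehat M$ from the $B^n$-output gives $H(M|\widehat M) \leq 1 + \epsilon \log N$, hence $nR = H(M) \leq I(M:B^n\text{-classical outcome}) + 1 + \epsilon nR$, which after rearranging yields $R \leq \frac{1}{n} I(M:\widehat M) + o(1)$, but more usefully $nR \leq H(M) = I(M:M) $ and we want to bound $H(M)$ by controlling the equivocation at $E^n$. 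The cleaner route: write $nR = H(M) = I_{\text{acc}}(M:\text{something}) + \text{equivocation}$, and argue that the ``something'' that Bob can decode reliably must be paid for, in the wiretap sense, by Eve's equivocation. Concretely, because $\cN^c$ is a cq-channel, the states $(\cN^c)^{\otimes n}(\rho_m)$ only depend on the diagonal of $\rho_m$ in the computational basis; so we may assume WLOG that each $\rho_m$ is diagonal, i.e.\ a probability distribution $p_{m}(\cdot)$ over $n$-letter strings $\mathbf{i}\in\{1,\dots,|A|\}^n$, and Eve's state is the corresponding average of $\rho_{i_1}\otimes\cdots\otimes\rho_{i_n}$. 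This ``classicalizes'' Alice's input without loss, which is the crucial structural simplification that Hadamard/Schur-multiplier structure buys us.

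Next I would pass to a single-letter bound. For a fixed Eve POVM $(Q_j)$ on a single copy of $E$, apply the product POVM $(Q_{j_1}\otimes\cdots\otimes Q_{j_n})$ on $E^n$; the weak-locking condition says $M$ is nearly independent of the outcome $\mathbf{J}=(J_1,\dots,J_n)$, so by the Alicki--Fannes / Pinsker argument already recalled in the Remark, $I(M:\mathbf J) \leq n\delta'$ with $\delta' \to 0$. On the other hand, letting $I_1,\dots,I_n$ be the (random) computational-basis letters that Alice actually sent — these are jointly distributed with $M$ and with $\mathbf J$, with $J_t$ obtained from $I_t$ through the fixed channel $\Pr\{J_t=j|I_t=i\} = p_i^{(t)}\tr\rho_i Q_j$ up to the input marginal — the data-processing/chain-rule manipulation gives $H(M) \le I(M:\mathbf I|\mathbf J)+I(M:\mathbf J) \le H(\mathbf I|\mathbf J)+n\delta' = \sum_t H(I_t|\mathbf J, I_{<t}) + n\delta' \le \sum_t H(I_t|J_t) + n\delta'$, the last step by conditioning reducing entropy plus the Markov structure $I_t \to J_t$ given everything else. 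Each term $H(I_t|J_t)$ is the equivocation of the single-letter ensemble $\{p_i^{(t)}, \rho_i\}$ measured by $(Q_j)$; bounding it by $\max_{(p_i)} \min_{(Q_j)} H(I|J) = \max_{(p_i)} S_{\text{acc}}(I|E)$ requires that we are allowed to optimize the input distribution $(p_i)$ but must accept the \emph{worst} (i.e.\ minimizing) Eve POVM — and here is the subtlety: we chose one fixed $(Q_j)$ and tensored it, so we get $\sum_t H(I_t|J_t) \le n\max_{(p_i)} H(I|J)$ for \emph{that} $(Q_j)$; taking the infimum over $(Q_j)$ on the outside (legitimate because the weak-locking hypothesis quantifies over \emph{all} Eve POVMs, so in particular over all product ones built from a single-copy POVM) delivers $R \le \max_{(p_i)} \min_{(Q_j)} H(I|J) + \delta'$, and then $\delta',\epsilon\to0$.

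The main obstacle I anticipate is the order of quantifiers around the Eve POVM combined with making the ``classicalize the input'' step fully rigorous in the presence of the pre-shared key: after averaging over the key, $\rho_m = \EE_k \rho_{mk}$ need not be a product state across the $n$ channel uses, so the telescoping $\sum_t H(I_t|J_t)$ must be done by the conditioning-reduces-entropy inequality on the joint $(I_1,\dots,I_n)$ rather than by any independence assumption — I would be careful that $I(M:\mathbf J)$ small for \emph{every} product POVM is exactly what the definition of weak locking grants (it grants it for all POVMs, product or not), so picking the product of a single optimal-for-Eve $(Q_j)$ is allowed. A secondary point to check is that the $o(n)$ noiseless qubits of pre-shared key contribute at most $o(n)$ to $H(M)$ relative to the $E^n$ side — the key is perfectly private so it adds nothing to Eve's side, and it inflates $\log N$ by only $o(n)$, hence is absorbed into the $o(1)$ error term. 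Once these bookkeeping issues are handled, the inequality $L_W(\cN) \le \max_{(p_i)} S_{\text{acc}}(I|E)$ follows by letting $n\to\infty$ and $\epsilon,\delta\to0$.
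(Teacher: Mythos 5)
Your overall strategy --- Fano on Bob's side, a fixed single-copy POVM tensored $n$ times on Eve's side, subadditivity of conditional entropy to single-letterize, and a minimax swap at the end --- is a legitimate and in some ways more elementary route than the paper's, which goes through the regularized upper bound $L_W^{(u)}$ of Guha \emph{et al.} and the additivity of constrained minimum output entropies. However, there is a genuine gap at the step you yourself flag as ``the crucial structural simplification'': the claim that one may assume w.l.o.g.\ that each $\rho_m$ is diagonal in the computational basis. Dephasing $\rho_m$ leaves Eve's states $(\cN^c)^{\ox n}(\rho_m)$ unchanged, but it does \emph{not} leave Bob's states unchanged: for a Schur multiplier, $\cN(\ketbra{i}{i'}) = \ketbra{i}{i'}\ox\tr_E\ketbra{\psi_i}{\psi_{i'}}$ retains off-diagonal blocks whose size is governed by the overlaps of the purifications $\ket{\psi_i}$, and Bob's decoder may rely on precisely these coherences. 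After dephasing, the code need no longer be reliable, so your chain $H(M)\le I(M:\mathbf{I}\mathbf{J})+o(n)$ --- which needs $H(M|\mathbf{I})=o(n)$, i.e.\ that the classical letters essentially determine the message --- is unsupported for the original code. The entire difficulty of the proposition is the trade-off that the more distinguishable the $\rho_i$ are for Eve, the less coherence survives to Bob; showing that coherent inputs never help is exactly what the paper's proof spends its effort on (reduction to pure-state ensembles via concavity of conditional entropy, a phase-twirling step, and finally the Holevo-bound comparison $I(I:E)\ge I(I:J)$ showing that replacing each $\ket{\varphi_x}$ by its basis-state decomposition can only decrease $H(FJ'|JX)$). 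One cannot get this for free from the cq-structure of $\cN^c$ alone.

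A secondary, fixable point: your argument as written delivers $R\le \min_{(Q_j)}\max_{(p_i)}H(I|J)+o(1)$, since you fix Eve's POVM first and then take whatever input marginals the code induces, whereas the statement is $\max_{(p_i)}\min_{(Q_j)}H(I|J)$; because $\min\max\ge\max\min$ in general, you need the minimax equality, which does hold here by concavity in $(p_i)$ and affinity in $(Q_j)$ (Lemma~\ref{lemma:minimax}), but it must be invoked. The remaining bookkeeping --- product POVMs are admissible because weak locking quantifies over all POVMs, the $o(n)$ pre-shared key costs only $o(n)$ in the entropy accounting, and $H(I_t|\mathbf{J},I_{<t})\le H(I_t|J_t)$ by conditioning --- is sound.
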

\begin{proof}
Basically, we evaluate the upper bound from~\cite[Thm.~8]{Guha-et-al:enigma}:
$L_W(\cN) \leq \sup_n \frac1n L_W^{(u)}(\cN^{\ox n})$, where
\begin{equation}
  \label{eq:L_W_upper}
  L_W^{(u)}(\cN) = \max_{\{p_x,\rho_x\}} I(X:B) - I_{\text{acc}}(X:E)
\end{equation}
is optimized with respect to arbitrary ensembles $\{p_x,\rho_x\}$ of
states on $A$. 

Choosing any probability distribution $(p_i)$ on the computational
basis states $\rho_i = \proj{i}$, we get $I(I:B) = H(I)$ and hence
$L_W^{(u)}(\cN) \geq S_{\text{acc}}(I|E)$.
Furthermore, for the tensor product $\cN_1\ox\cN_2$ of two complements 
of cq-channels,
\[
  \max_{(p_{i_1i_2})} S_{\text{acc}}(I_1 I_2|E_1 E_2)
      = \max_{(p_{i_1})} S_{\text{acc}}(I_1|E_1) + \max_{(p_{i_2})} S_{\text{acc}}(I_2|E_2),
\]
by Lemma~\ref{lemma:S-acc-add} below. This shows in fact that for
any integer $n$,
$\frac1n L_W^{(u)}\bigl(\cN^{\ox n}\bigr) \geq \max_{(p_i)} S_{\text{acc}}(I|E)$.

Thus, it remains to show $L_W^{(u)}(\cN) \leq \max_{(p_i)} S_{\text{acc}}(I|E)$. 
To do so, we shall 
first show that for degradable channels, an optimal ensemble for 
eq.~(\ref{eq:L_W_upper}) consists w.l.o.g.~of pure states 
$\rho_x = \proj{\varphi_x}$, 
and then in a second step that we can choose these pure states as
computational basis states, modifying the ensemble accordingly.

\emph{1.} For the degradable channel $\cN$, choose a Stinespring isometry
$V_0:A \hookrightarrow B \ox E$, and for the degrading map an isometry
$V_1:B \hookrightarrow E' \ox F$. The accessible information
requires a measurement $(Q_j)$ -- w.l.o.g.~consisting of rank-one
operators --, for whose associated qc-channel we choose an isometry
(acting on $E'$ but of course equally on $E$)
$V_2:E' \hookrightarrow J \ox J'$. Now, given an ensemble $\cE = \{p_x,\rho_x\}$,
\begin{equation}\begin{split}
  \label{eq:CK-expression}
  I(X:B) - I_{\text{acc}}(X:E) &= I(X:FJJ') - I(X:J)    \\
                               &= I(X:FJ'|J)            \\
                               &= H(FJ'|J) - H(FJ'|JX),
\end{split}\end{equation}
where all expressions except the l.h.s.~are with respect to the state
\[
  \omega^{XFJJ'} = \sum_x p_x \proj{x}^X \ox \bigl( V_2 V_1 \cN(\rho_x) V_1^\dagger V_2^\dagger \bigr)^{FJJ'}.
\]
On the r.h.s.~of eq.~(\ref{eq:CK-expression}), $H(FJ'|J)$ depends only
on $\omega^{FJJ'}$ and so is unchanged if we replace each $\rho_x$
in $\cE$ by any of its pure state decompositions. On the other hand,
\[
  H(FJ'|JX)_\omega = \sum_x p_x H(FJ'|J)_{V_2 V_1 \cN(\rho_x) V_1^\dagger V_2^\dagger},
\]
and since the conditional entropy is concave in the state~\cite{SSA},
this replacement can make the latter quantity only smaller.

\emph{2.} Now that we know that we may assume a pure state ensemble 
$\cE = \{p_x,\rho_x=\proj{\varphi_x}\}$, we specialize to the complements
of cq-channels. Looking at eqs.~(\ref{eq:cq-channel}) and (\ref{eq:our-channel}),
we see that $\cN^c$ is invariant, and $\cN$ covariant, under conjugation
by phase (diagonal) unitaries. By twirling the ensemble by phase unitaries
(\emph{i.e.}~replacing each $\ket{\varphi_x}$ by a uniform distribution over
$U^{\text{diag}}\ket{\varphi_x}$), we thus can only increase the r.h.s.~of
eq.~(\ref{eq:CK-expression}) by leaving $H(FJ'|JX)$ alone while
$H(FJ'|J)$ can only increase, since $\omega^{FJJ'}$ is now invariant under 
conjugation by phase unitaries.

The proof will be concluded by showing that
\[
  H(FJ'|JX) = \sum_x p_x H(FJ'|J)_{V_2 V_1 \cN(\proj{\varphi_x}) V_1^\dagger V_2^\dagger}
\]
can only decrease if we replace each $\ket{\varphi_x} = \sum_i \alpha_{i|x}\ket{i}$ 
by the ensemble $\{ p_{i|x} = |\alpha_{i|x}|^2,\proj{i} \}$, hence the 
original ensemble $\cE$ by $\widetilde{\cE} = \{ p_{xi} = p_x p_{i|x}, \proj{i} \}$, which in turn
has the same value of the expression (\ref{eq:CK-expression}) as 
$\{ p_i =\sum_x p_{xi}, \proj{i} \}$.
Indeed, the corresponding $\widetilde{\omega}$ has the same reduction
on $FJJ'$, $\omega^{FJJ'} = \widetilde{\omega}^{FJJ'}$, and
for every $x$, we have
\begin{equation}
  \label{eq:final-step}
  H(FJ'|J)_{V_2 V_1 \cN(\proj{\varphi_x}) V_1^\dagger V_2^\dagger} 
     \geq \sum_i p_{i|x} H(FJ'|J)_{V_2 V_1 \cN(\proj{i}) V_1^\dagger V_2^\dagger}.
\end{equation}
To see this, we expand the l.h.s.~as
\[\begin{split}
  H(FJ'|J)_{V_2 V_1 \cN(\proj{\varphi_x}) V_1^\dagger V_2^\dagger} 
     &= H(FJJ')_{V_2 V_1 \cN(\proj{\varphi_x}) V_1^\dagger V_2^\dagger} 
        - H(J)_{V_2 V_1 \cN(\proj{\varphi_x}) V_1^\dagger V_2^\dagger}                         \\
     &= S\bigl(\cN(\proj{\varphi_x})\bigr) - H\bigl(\{\tr\cN^c(\proj{\varphi_x}) Q_j\}_j\bigr) \\
     &= S\bigl(\cN^c(\proj{\varphi_x})\bigr) - H\bigl(\{\tr\cN^c(\proj{\varphi_x}) Q_j\}_j\bigr),
\end{split}\]
and observe
$\cN^c(\proj{\varphi_x}) = \sum_i p_{i|x}\cN^c(\proj{i}) = \sum_i p_{i|x}\rho_i =: \overline{\rho}$.
While on the r.h.s., for each $i$,
\[\begin{split}
  H(FJ'|J)_{V_2 V_1 \cN(\proj{i}) V_1^\dagger V_2^\dagger} 
     &= H(FJJ')_{V_2 V_1 \cN(\proj{i}) V_1^\dagger V_2^\dagger} 
        - H(J)_{V_2 V_1 \cN(\proj{i}) V_1^\dagger V_2^\dagger}               \\
     &= S\bigl(\cN(\proj{i})\bigr) - H\bigl(\{\tr\cN^c(\proj{i}) Q_j\}_j\bigr) \\
     &= S\bigl(\cN^c(\proj{i})\bigr) - H\bigl(\{\tr\cN^c(\proj{i}) Q_j\}_j\bigr) \\
     &= S\bigl(\rho_i\bigr) - H\bigl(\{\tr\rho_i Q_j\}_j\bigr).
\end{split}\]
Hence, the difference between l.h.s.~and r.h.s.~of eq.~(\ref{eq:final-step}) is
\[\begin{split}
  H(FJ'|J)_{V_2 V_1 \cN(\proj{\varphi_x}) V_1^\dagger V_2^\dagger} 
   &- \sum_i p_{i|x} H(FJ'|J)_{V_2 V_1 \cN(\proj{i}) V_1^\dagger V_2^\dagger} \\
   &\!\!\!\!\!\!\!\!\!\!
    = S(\overline{\rho}) - \sum_i p_{i|x} S(\rho_i)
        - H\bigl(\{\tr\overline{\rho}Q_j\}_j\bigr) + \sum_i p_{i|x} H\bigl(\{\tr\proj{i}Q_j\}_j\bigr) \\
   &\!\!\!\!\!\!\!\!\!\!
    = I(I:E) - I(I:J) \geq 0,
\end{split}\]
the last inequality by the famous Holevo bound~\cite{Holevo:bound}, and we are done.
\end{proof}

\begin{lemma}
  \label{lemma:S-acc-add}
  For the tensor product of channels $\cN_1$ and $\cN_2$, each of 
  which is the complement of a cq-channel,
  \[
    \max_{(p_{i_1i_2})} S_{\text{acc}}(I_1 I_2|E_1 E_2)
       = \max_{(p_{i_1})} S_{\text{acc}}(I_1|E_1) + \max_{(p_{i_2})} S_{\text{acc}}(I_2|E_2).
  \]
\end{lemma}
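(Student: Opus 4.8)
The plan is to prove the two inequalities separately. Both rest on one elementary observation: if $(Q_j)$ is a POVM on $E_1\ox E_2$ and $\tau$ is a fixed state on $E_2$, then $\Lambda_j:=\tr_{E_2}\bigl[(\1_{E_1}\ox\tau)Q_j\bigr]$ is a POVM on $E_1$ (positivity by diagonalising $\tau$ and using $\langle b|Q_j|b\rangle\ge 0$; completeness from $\tr_{E_2}(\1\ox\tau)=\1$), and symmetrically $M_j:=\tr_{E_1}\bigl[(\rho\ox\1_{E_2})Q_j\bigr]$ is a POVM on $E_2$ for any state $\rho$ on $E_1$. Note also that, by eq.~(\ref{eq:cq-channel}), feeding the computational basis state $\ket{i_1i_2}$ into $\cN_1\ox\cN_2$ gives Eve exactly the product state $\rho_{i_1}^{E_1}\ox\rho_{i_2}^{E_2}$, so the relevant cq-state is $\sum_{i_1i_2}p_{i_1i_2}\proj{i_1i_2}\ox\rho_{i_1}^{E_1}\ox\rho_{i_2}^{E_2}$; write $(p_{i_1})$, $(p_{i_2})$ for its marginals.

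For ``$\leq$'': fix any $(p_{i_1i_2})$ and choose measurements $\widehat Q^{(1)}$, $\widehat Q^{(2)}$ achieving $S_{\text{acc}}(I_1|E_1)$ at $(p_{i_1})$ and $S_{\text{acc}}(I_2|E_2)$ at $(p_{i_2})$. Measuring $E_1\ox E_2$ with the product POVM $\widehat Q^{(1)}\ox\widehat Q^{(2)}$ yields outcomes $(J_1,J_2)$; since conditioned on $I_1=i_1$ the $E_1$-marginal is $\rho_{i_1}$ (and likewise for $E_2$), one has $H(I_1|J_1)=S_{\text{acc}}(I_1|E_1)$ at $(p_{i_1})$ and $H(I_2|J_2)=S_{\text{acc}}(I_2|E_2)$ at $(p_{i_2})$. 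Hence, by subadditivity of Shannon entropy and monotonicity under further conditioning,
\[
  S_{\text{acc}}(I_1I_2|E_1E_2) \le H(I_1I_2|J_1J_2) \le H(I_1|J_1)+H(I_2|J_2)
      \le \max_{(p_{i_1})}S_{\text{acc}}(I_1|E_1)+\max_{(p_{i_2})}S_{\text{acc}}(I_2|E_2),
\]
and maximising over $(p_{i_1i_2})$ finishes this direction.

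For ``$\geq$'': take $(p_{i_1i_2})=(p_{i_1}p_{i_2})$ with each factor optimal, let $(Q_j)$ be \emph{any} POVM on $E_1\ox E_2$, and put $\bar\sigma_2:=\sum_{i_2}p_{i_2}\rho_{i_2}$. By the chain rule $H(I_1I_2|J)=H(I_1|J)+H(I_2|I_1J)$. Conditioned on $I_1=i_1$ the $E_1E_2$-state is $\rho_{i_1}\ox\bar\sigma_2$, so $\Pr(J{=}j|I_1{=}i_1)=\tr[\rho_{i_1}\Lambda_j]$ with $\Lambda_j=\tr_{E_2}[(\1\ox\bar\sigma_2)Q_j]$ a POVM on $E_1$, whence $H(I_1|J)\ge S_{\text{acc}}(I_1|E_1)$. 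Conditioned on $I_1=i_1$ and $I_2=i_2$ the state is $\rho_{i_1}\ox\rho_{i_2}$, so $\Pr(J{=}j|I_1{=}i_1,I_2{=}i_2)=\tr[\rho_{i_2}M^{(i_1)}_j]$ with $M^{(i_1)}_j=\tr_{E_1}[(\rho_{i_1}\ox\1)Q_j]$ a POVM on $E_2$; thus $H(I_2|I_1{=}i_1,J)\ge S_{\text{acc}}(I_2|E_2)$ for every $i_1$, hence $H(I_2|I_1J)\ge S_{\text{acc}}(I_2|E_2)$. Therefore $H(I_1I_2|J)\ge S_{\text{acc}}(I_1|E_1)+S_{\text{acc}}(I_2|E_2)$ for every $(Q_j)$, and minimising over $(Q_j)$ gives ``$\geq$''.

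The only slightly delicate point is confirming that $\Lambda_j$ and $M^{(i_1)}_j$ are genuinely positive semidefinite — not automatic, since a product of two positive operators need not be positive — which is why the first paragraph isolates that fact; everything else is the chain rule plus the two standard monotonicities of conditional Shannon entropy. One may also note that the argument is essentially the additivity of accessible information on product ensembles, $I_{\text{acc}}(\cE_1\ox\cE_2)=I_{\text{acc}}(\cE_1)+I_{\text{acc}}(\cE_2)$ (the ``$\ge$'' trivial via product measurements), from which the equivocation statement follows because $H(I_1I_2)=H(I_1)+H(I_2)$ for independent $I_1,I_2$, with the ``$\leq$'' direction showing the outer maximisation may be restricted to product input distributions.
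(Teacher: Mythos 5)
Your proof is correct, and your ``$\leq$'' direction coincides with the paper's (product POVM plus subadditivity of Shannon entropy, with the local measurements chosen optimally for the marginals). For the ``$\geq$'' direction, however, you take a genuinely different and more elementary route. The paper's proof switches the roles of ensemble and POVM to identify $S_{\text{acc}}(I_1I_2|E_1E_2)$ with a \emph{constrained minimum output entropy} $\widehat{H}\bigl(\cM_1\ox\cM_2|\overline{\rho}^{(1)}\ox\overline{\rho}^{(2)}\bigr)$ of a product of qc-channels, and then invokes King's additivity lemma for entanglement-breaking channels --- an external and nontrivial input. You instead apply the chain rule $H(I_1I_2|J)=H(I_1|J)+H(I_2|I_1J)$ directly to an arbitrary joint POVM and observe that the induced marginal and conditional measurements, $\Lambda_j=\tr_{E_2}\bigl[(\1\ox\bar\sigma_2)Q_j\bigr]$ and $M^{(i_1)}_j=\tr_{E_1}\bigl[(\rho_{i_1}\ox\1)Q_j\bigr]$, are genuine POVMs on the single factors; your positivity check is correct and is indeed the one point that needs care. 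This yields a self-contained proof of the additivity of accessible information on product ensembles (equivalently, of the qc-channel special case of King's lemma that the paper actually uses), and as a proof of Lemma~\ref{lemma:S-acc-add} in isolation it is simpler. What the paper's formulation buys is reusability: the same switching trick and the constrained-minimum-output-entropy viewpoint reappear in Corollary~\ref{cor:accessible-conditional-Hmin} and in the R\'{e}nyi additivity conjecture (\ref{eq:constrained-alpha-add}) of Section~\ref{sec:further}, where a chain-rule argument is not available because conditional R\'{e}nyi and min-entropies do not chain losslessly; so your argument replaces the paper's proof of this lemma but not the surrounding machinery.
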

\begin{proof}
First, for any distribution $(p_{i_1i_2})$, and any measurement POVM $(Q_j)$,
we have, by subadditivity of the entropy,
$H(I_1 I_2|J) \leq H(I_1|J) + H(I_2|J)$, hence, choosing the POVM to be a
tensor product of local POVMs, $Q_{j_1 j_2} = Q_{j_1} \ox Q_{j_2}$, we get
\[
  S_{\text{acc}}(I_1 I_2|E_1 E_2)
     \leq S_{\text{acc}}(I_1|E_1) + S_{\text{acc}}(I_2|E_2).
\]

On the other hand, consider a product distribution $p_{i_1 i_2}=p_{i_1}p_{i_2}$,
the output of $(\cN_1\ox\cN_2)^c$ is a product ensemble
$\{p_{i_1},\rho_{i_1}\} \ox \{p_{i_2},\rho_{i_2}\}$. For a generic
POVM $(Q_j)$ on $E_1E_2$, we can switch around the roles of
the ensemble and of the POVM, observing that with 
$\overline{\rho}^{(b)} = \sum_{i_b} p_{i_b}\rho_{i_b}$ 
and the POVMs(!) composed of the operators
$M_{i_b} = \left(\overline{\rho}^{(b)}\right)^{-\frac12} p_{i_b}\rho_{i_b} 
           \left(\overline{\rho}^{(b)}\right)^{-\frac12}$ ($b=0,1$),
\[\begin{split}
  \Pr\{I_1=i_1,I_2=i_2,J=j\} &= p_{i_1}p_{i_2}\tr(\rho_{i_1}\ox\rho_{i_2})Q_j \\
                             &= \tr\left(\sqrt{\overline{\rho}^{(1)}\ox\overline{\rho}^{(2)}} 
                                           Q_j 
                                         \sqrt{\overline{\rho}^{(1)}\ox\overline{\rho}^{(2)}}\right)
                                        (M_{i_1}\ox M_{i_2})                  \\
                             &=: q_j \tr \sigma_j(M_{i_1}\ox M_{i_2}).
\end{split}\] 
Thus,
\[
  H(I_1I_2|J) = \sum_j q_j S\bigl( (\cM_1\ox\cM_2)\sigma_j \bigr),
\]
where $\cM_b$ is the qc-channel representing the POVM $(M_{i_b})$ ($b=0,1$).
This means
\[
  S_{\text{acc}}(I_1I_2|E_1E_2) 
        = \widehat{H}\bigl(\cM_1\ox\cM_2|\overline{\rho}^{(1)}\ox\overline{\rho}^{(2)}\bigr),
\]
and likewise
\[
  S_{\text{acc}}(I_1|E_1) = \widehat{H}\bigl(\cM_1|\overline{\rho}^{(1)}\bigr),
    \quad
  S_{\text{acc}}(I_2|E_2) = \widehat{H}\bigl(\cM_2|\overline{\rho}^{(2)}\bigr),
\]
where 
\[
  \widehat{H}(\cM|\sigma) := \min_{\{q_j,\psi_j\}} \sum_j q_j H\bigl(\cM(\psi_j)\bigr)
                                                   \quad\text{s.t.}\quad \sum_j q_j \psi_j = \sigma
\]
is the \emph{constrained minimum output entropy}.
But now we can invoke~\cite[Lemma 3]{King:additive}, by which
\[
  \widehat{H}\bigl(\cM_1\ox\cM_2|\overline{\rho}^{(1)}\ox\overline{\rho}^{(2)}\bigr)
      = \widehat{H}\bigl(\cM_1|\overline{\rho}^{(1)}\bigr) 
        + \widehat{H}\bigl(\cM_2|\overline{\rho}^{(2)}\bigr),
\]
concluding the proof.
\end{proof}

\medskip
Even though we do not make use of it here, we cannot pass without noting
the following fundamental property of the optimization of $H(I|J)$:
\begin{lemma}
  \label{lemma:minimax}
  For an ensemble $\cE = \{p_i,\rho_i\}$ and a POVM $Q = (Q_j)$,
  the function $\eta(\cE;Q) = H(I|J)$ is concave in $\cE$ and
  convex (actually affine) in $Q$, in the following sense:
  For ensembles $\cE^{(0)} = \{p_i^{(0)},\rho_i\}$ and
  $\cE^{(1)} = \{p_i^{(1)},\rho_i\}$ (without loss of generality
  sharing the same set of states),
  $\cE = \lambda\cE^{(0)} + (1-\lambda)\cE^{(1)} 
       = \{\lambda p_i^{(0)} + (1-\lambda) p_i^{(1)},\rho_i\}$
  satisfies
  \[
    \eta(\lambda\cE^{(0)} + (1-\lambda)\cE^{(1)},Q) 
          \geq \lambda \eta(\cE^{(0)},Q) + (1-\lambda) \eta(\cE^{(1)},Q).
  \]
  Instead, for POVMs $(Q_j^{(0)})$ and $(Q_k^{(1)})$ on \emph{disjoint}
  index sets $\{j\}$ and $\{k\}$,
  $Q = \lambda Q^{(0)} \oplus (1-\lambda)Q^{(1)}
     = \bigl( \lambda Q_j^{(0)},(1-\lambda) Q_k^{(1)} \bigr)$
  satisfies
  \[
    \eta(\cE,\lambda Q^{(0)} \oplus (1-\lambda)Q^{(1)})
           =   \lambda \eta(\cE,Q^{(0)}) + (1-\lambda) \eta(\cE,Q^{(1)}).
  \]
  Consequently,
  \[
    \max_{(p_i)} S_{\text{acc}}(I|E) = \max_{(p_i)} \min_{(Q_j)} H(I|J)
                                     = \min_{(Q_j)} \max_{(p_i)} H(I|J).
  \]
\end{lemma}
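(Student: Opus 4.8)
The plan is to verify the two stated convexity properties by elementary identities and then to extract the min-max equality from Sion's minimax theorem, after recasting the minimisation over POVMs as a minimisation over a compact convex set on which $\eta$ is \emph{affine}.

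\emph{Concavity in $\cE$.} Fix the states $\rho_i$ and the POVM $(Q_j)$, so that the row-stochastic matrix $c_{ij}:=\tr\rho_i Q_j$ (row sums $\sum_j c_{ij}=\tr\rho_i=1$) is fixed and $(p_i)$ is the only parameter; then $\eta(\cE;Q)=H(I|J)=H(I)-I(I:J)$ for the joint law $\Pr\{I=i,J=j\}=p_i c_{ij}$, affine in $(p_i)$. Since $H(I)$ is concave in $(p_i)$ while the mutual information $I(I:J)$ through the \emph{fixed} channel $c$ is convex in the input distribution, $\eta$ is concave in $\cE$. (Equivalently, $H(I|J)=\sum_{i,j}P(i,j)\log\frac{P(\,\cdot\,,j)}{P(i,j)}$ is the positively $1$-homogeneous extension to the cone of non-negative joint distributions of the classical conditional entropy, hence concave there, and $P(i,j)=p_i c_{ij}$ is affine in $\cE$.) \emph{Affinity in $Q$.} For POVMs $Q^{(0)}=(Q^{(0)}_j)_j$, $Q^{(1)}=(Q^{(1)}_k)_k$ on disjoint index sets, $Q=(\lambda Q^{(0)}_j,(1-\lambda)Q^{(1)}_k)$ is a POVM since $\sum_j\lambda Q^{(0)}_j+\sum_k(1-\lambda)Q^{(1)}_k=\lambda\1+(1-\lambda)\1=\1$. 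Let $B\in\{0,1\}$ record which of the two index sets the outcome falls in; as the sets are disjoint, $B$ is a deterministic function of the outcome, so $H(I|J)=H(I|JB)$. A one-line computation gives $\Pr\{B=0\}=\lambda$ and, conditioned on $B=0$, the joint law of $(I,\text{outcome})$ equals $p_i\tr\rho_i Q^{(0)}_j$, i.e.\ exactly the law defining $\eta(\cE;Q^{(0)})$ (the factor $\lambda$ cancels against the normalisation), and similarly for $B=1$; hence $\eta(\cE;Q)=H(I|JB)=\lambda\,\eta(\cE;Q^{(0)})+(1-\lambda)\,\eta(\cE;Q^{(1)})$.

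\emph{The min-max identity.} The outer $\max$ is over the compact convex simplex of distributions $(p_i)$, on which $\eta$ is concave and continuous, so it remains to present the inner minimisation over a compact convex domain on which $\eta$ is affine. First, in $\min_{(Q_j)}H(I|J)$ one may restrict to rank-one POVMs: diagonalising each POVM element refines the measurement, and refining an outcome can only \emph{decrease} $H(I|J)$ (data processing for the opposite coarse-graining). A rank-one POVM on $E$, $d:=\dim E$, is precisely a Borel probability measure $\mu$ on the pure states $\cS(E)$ with barycentre $\int\proj{\psi}\,d\mu(\psi)=\tfrac1d\1$ (the outcome ``$\psi$'' carrying the operator $d\,\mu(\{\psi\})\proj{\psi}$); the set $\cM$ of such measures is convex and weak-$*$ compact, mixing of measures corresponds to the disjoint-combination operation above followed by merging outcomes with proportional operators (which leaves $H(I|J)$ untouched), and $H(I|J)$ becomes the \emph{linear} functional
\[
  \eta(\cE;\mu)=d\int_{\cS(E)}g_p(\psi)\,d\mu(\psi),
  \qquad
  g_p(\psi):=\tr\bigl(\overline{\rho}\,\proj{\psi}\bigr)\,
             H\!\Bigl(\bigl\{\tfrac{p_i\tr\rho_i\proj{\psi}}{\tr(\overline{\rho}\,\proj{\psi})}\bigr\}_i\Bigr),
  \quad \overline{\rho}=\textstyle\sum_i p_i\rho_i,
\]
where $g_p$ is a bounded continuous function of $\psi$ (convention $0\cdot H=0$) and is concave in $(p_i)$ because $q\mapsto\sum_i q_i\log\frac{\sum_k q_k}{q_i}$ is $1$-homogeneous and reduces to the Shannon entropy on the simplex, hence concave on the whole non-negative cone, and $q_i=p_i\tr\rho_i\proj{\psi}$ is affine in $(p_i)$. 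Now $\eta$ is affine and weak-$*$ continuous in $\mu\in\cM$ and concave and continuous in $(p_i)$, both domains compact convex, so Sion's (equivalently Ky Fan's) minimax theorem gives $\max_{(p_i)}\min_{\mu}\eta=\min_{\mu}\max_{(p_i)}\eta$. Unwinding, the left-hand side is $\max_{(p_i)}S_{\text{acc}}(I|E)$, and the right-hand side equals $\min_{(Q_j)}\max_{(p_i)}H(I|J)$ (rank-one POVMs again suffice on the right, since refining lowers $H(I|J)$ for every $(p_i)$ and hence lowers its maximum over $(p_i)$).

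I expect the genuine work to be concentrated in this last step: converting the unrestricted optimisation over POVMs --- where the natural ``convex combination'' is disjoint combination rather than a linear operation --- into a bona fide compact convex optimisation to which a minimax theorem applies. The measure-on-pure-states representation of rank-one POVMs, together with the (routine but not entirely trivial) verifications that $g_p$ is jointly continuous and concave in $p$ and that the two ``rank-one suffices'' reductions are valid, is the technical core; the two displayed convexity statements themselves are straightforward.
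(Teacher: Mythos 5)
Your two convexity claims and the overall strategy are sound, but you take a much more elaborate route to the minimax than the paper, which disposes of the whole lemma in three sentences: concavity of $H(I|J)$ in $\cE$ "boils down to concavity of the Shannon entropy", affinity in $Q$ is "evident from the definition", and the exchange of $\max$ and $\min$ follows by "invoking von Neumann's minimax theorem". You are right that the last step is where the real content lies, since under elementwise mixing of POVM elements $H(I|J)$ is \emph{concave} (the joint law is affine in $Q$ and conditional entropy is concave in the joint law), so the only convex structure in which the function is convex in $Q$ is the disjoint-combination one, which does not sit inside a linear space. Your fix --- passing to rank-one POVMs, identifying them with probability measures on pure states with barycentre $\1/d$, and applying Sion on that weak-$*$ compact convex set --- is legitimate. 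A lighter alternative, closer in spirit to the paper's one-line invocation, is Fan's minimax theorem for concave--convexlike functions: it needs no convexity or topology on the POVM side at all, only that for any $Q^{(0)},Q^{(1)}$ there \emph{exists} a POVM realizing the convex combination of the values uniformly in $p$, which is exactly what the disjoint combination provides. That route avoids continuous-outcome measurements entirely.

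Two points need repair. First, your primary argument for concavity in $\cE$ asserts that $I(I:J)$ is convex in the input distribution for a fixed channel; it is in fact \emph{concave} in the input distribution (and convex in the channel), so the decomposition $H(I|J)=H(I)-I(I:J)$ proves nothing. Your parenthetical argument --- conditional entropy is the $1$-homogeneous, hence superadditive, hence concave extension of Shannon entropy to the cone of joint distributions, composed with the affine map $p\mapsto(p_ic_{ij})$ --- is correct and should be promoted to the main argument. Second, after Sion you "unwind" $\min_\mu\max_p\eta$ to $\min_{(Q_j)}\max_{(p_i)}H(I|J)$. The inequality $\min_\mu\max_p\le\min_Q\max_p$ is immediate (discrete measures form a subset of $\cM$), but the lemma needs the reverse: an arbitrary barycentre-constrained measure, i.e.\ a continuous-outcome measurement, must be approximable by genuine finite POVMs without increasing $\max_p\eta$. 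This requires weak-$*$ density of the finitely supported elements of $\cM$ together with joint continuity of $\eta$ (so that $\mu\mapsto\max_p\eta(p,\mu)$ is continuous, not merely lower semicontinuous --- note that this function is convex, so its minimum over $\cM$ need not sit at an extreme point, and the extreme-point argument that works for the left-hand side $\max_p\min_\mu$ does not apply here). Both facts are true and provable, but they are exactly the kind of step you cannot leave implicit once you have chosen the measure-theoretic embedding; the Fan convexlike route sidesteps them.
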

\begin{proof}
The concavity property boils down to the concavity of the
Shannon entropy. The affine-linearity is evident from the definition.
Finally, the minimax statement is an application of the concavity
in the first and convexity in the second argument, invoking
von Neumann's minimax theorem~\cite{vonNeumann:minimax,Sion:minimax}.
\end{proof}

\section{Lower bound on $\mathbf{L_W}$ for complements of cq-channels}
\label{sec:main}
We shall need the following auxiliary lemma, which was proved by
Damgaard \emph{et al.}~\cite{Damgaard:high-order} in a very similar form.
In Appendix~\ref{app:simple} we give a simple proof of it, based
on the additivity of the minimum output R\'{e}nyi-entropy for 
entanglement-breaking channels~\cite{King:additive}.

\begin{proposition}
  \label{prop:high-order}
  Consider a POVM $M=(M_i)$ and its associated qc-channel $\cM$, with
  minimum output entropy 
  $\displaystyle{\widehat{H}(\cM) := \min_{\psi\text{ state}} H(\cM(\psi))}$,
  where $H(\cM(\psi)) = H\bigl(\{q_i = \tr \psi M_i\}\bigr)$.
  Then, for any $0< \epsilon,\,\delta < 1$, and any state $\psi$ on $n$
  input systems,
  \[
    H_{\min}^\epsilon\bigl( \cM^{\ox n}(\psi) \bigr)
                     \geq n\bigl( \widehat{H}(\cM) - \delta \bigr)
                            - {16(\log d)^2} \frac{1}{\delta} \log\frac{1}{\epsilon}.
  \]
  For a suitable choice of $\delta$ (depending on $\epsilon$), 
  we get (for sufficiently large $n$, ensuring that $\delta < 1$):
  \[
    H_{\min}^\epsilon\bigl( \cM^{\ox n}(\psi) \bigr)
                     \geq n \widehat{H}(\cM) - 8(\log d) \sqrt{n \log\frac{1}{\epsilon}}.
  \]
\end{proposition}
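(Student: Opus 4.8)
The plan is to trade the von Neumann entropy for R\'enyi entropies $H_\alpha$ of order $\alpha$ slightly larger than $1$, the point being that for the qc-channel $\cM$ --- which is entanglement-breaking --- the \emph{minimum output R\'enyi entropy} is exactly additive under tensor powers by King's multiplicativity theorem \cite{King:additive}, a property the minimum output von Neumann entropy lacks for general channels. Three ingredients are needed: (i) a Markov-type bound turning a lower bound on $H_\alpha$ of a classical distribution into one on its $\epsilon$-smooth min-entropy; (ii) additivity $\widehat H_\alpha(\cM^{\ox n}) = n\,\widehat H_\alpha(\cM)$; and (iii) a quantitative comparison $\widehat H_\alpha(\cM) \ge \widehat H(\cM) - c(\alpha-1)(\log d)^2$ valid for $\alpha$ near $1$.

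For (i), note that $\cM^{\ox n}(\psi)$ is diagonal, hence a classical distribution $P = (p_x)$ on $d^n$ symbols, so $H_{\min}^\epsilon$ is the ordinary classical smooth min-entropy. Fix $t>0$; for every $x$ with $p_x > t$ one has $p_x^\alpha > t^{\alpha-1}p_x$, so $\Pr_{X\sim P}\{p_X > t\} < t^{-(\alpha-1)}\sum_x p_x^\alpha = \bigl(t\,2^{H_\alpha(P)}\bigr)^{-(\alpha-1)}$. Choosing $t$ to make the right-hand side equal $\epsilon$, i.e.\ $-\log t = H_\alpha(P) - \tfrac{1}{\alpha-1}\log\tfrac1\epsilon$, the sub-normalised vector obtained from $P$ by deleting all entries exceeding $t$ is $\epsilon$-close to $P$ (in trace distance) and has largest entry $\le t$; hence $H_{\min}^\epsilon(P) \ge H_\alpha(P) - \tfrac{1}{\alpha-1}\log\tfrac1\epsilon$. (With a purified-distance smoothing convention this only multiplies $\log\tfrac1\epsilon$ by a constant, which is where the slack in the numerical constant $16$ is spent.) Feeding in $H_\alpha(\cM^{\ox n}(\psi)) \ge \widehat H_\alpha(\cM^{\ox n})$ and (ii) gives $H_{\min}^\epsilon(\cM^{\ox n}(\psi)) \ge n\,\widehat H_\alpha(\cM) - \tfrac{1}{\alpha-1}\log\tfrac1\epsilon$.

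For (iii), write $\beta = \alpha-1 \in (0,1)$ and, for a single output distribution $q = \cM(\phi)$ on $d$ symbols, use $e^{-u}\le 1-u+\tfrac{u^2}{2}$ ($u\ge0$) with $u = -\beta\ln q_x$ to get $\sum_x q_x^{1+\beta} \le 1 - \beta\ln2\,H(q) + \tfrac{\beta^2(\ln2)^2}{2}\EE_q\bigl[(\log q_X)^2\bigr]$, and then $\ln(1+v)\le v$ to get $H_\alpha(q) = -\tfrac1\beta\log\sum_x q_x^{1+\beta} \ge H(q) - \tfrac{\beta\ln2}{2}\EE_q\bigl[(\log q_X)^2\bigr]$. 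The varentropy-type quantity $\EE_q[(\log q_X)^2] = \sum_x q_x(\log q_x)^2$ is bounded by $6(\log d)^2$: splitting the sum at $q_x = d^{-2}$, the terms with $q_x \ge d^{-2}$ contribute at most $(2\log d)^2$, and since $u\mapsto u(\log u)^2$ is increasing on $(0,e^{-2}]$, each of the $\le d$ remaining terms is at most $d^{-2}(2\log d)^2$, contributing at most $2(\log d)^2$ in total. Taking the minimum over $\phi$ (the $\phi$-independent penalty passes through the min) yields $\widehat H_\alpha(\cM) \ge \widehat H(\cM) - 3\ln2\,(\alpha-1)(\log d)^2$. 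Now set $\alpha - 1 = \delta/\bigl(c(\log d)^2\bigr)$ with $c$ the resulting universal constant (which one checks is $\le 16$ even after accounting for the smoothing convention); since $\delta < 1$ this keeps $\alpha$ in $(1,2)$, so King's theorem applies, and the two error terms become exactly $n\delta$ and $c(\log d)^2\tfrac1\delta\log\tfrac1\epsilon$, which is the first displayed inequality. The second, ``$\sqrt n$'' form is then elementary calculus: minimising $n\delta + 16(\log d)^2\tfrac1\delta\log\tfrac1\epsilon$ over $\delta$ gives $\delta = 4(\log d)\sqrt{\log(1/\epsilon)/n}$, which is $<1$ once $n > 16(\log d)^2\log\tfrac1\epsilon$, and the total penalty is $8(\log d)\sqrt{n\log(1/\epsilon)}$.

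The conceptual crux is recognising that one should route the argument through R\'enyi entropies so that \cite{King:additive} delivers exact additivity for the $n$-fold channel; the main technical point to get right is the comparison in (iii) with the correct $(\log d)^2$ scaling --- i.e.\ the varentropy bound --- together with checking that the $\alpha$ one picks stays within the range covered by the multiplicativity theorem.
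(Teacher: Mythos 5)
Your proposal is correct and follows essentially the same route as the paper's Appendix~\ref{app:simple}: lower-bound the smooth min-entropy by a R\'enyi entropy of order $\alpha>1$, invoke King's additivity of minimum output R\'enyi entropies for entanglement-breaking channels to single-letterize, compare $H_\alpha$ to $H$ with an $O((\alpha-1)(\log d)^2)$ penalty, and optimize $\alpha$. The only difference is that you re-derive from scratch the two auxiliary facts the paper simply cites (the Renner--Wolf smoothing bound, via your truncation argument, and Tomamichel's $H_\alpha$-versus-$H$ comparison, via your varentropy estimate), obtaining an even better constant than the stated $16$.
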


Here, $H_{\min}^\epsilon$ is the \emph{smooth min-entropy}~\cite{Renner:PhD}:
\begin{definition}
  \label{defi:H_min}
  For a state $\rho$, the \emph{min-entropy} is $H_{\min}(\rho) := -\log \|\rho\|$,
  and the \emph{smooth min-entropy}
  \[
    H_{\min}^\epsilon(\rho) = \max H_{\min}(\rho') 
            \quad\text{s.t.}\quad \frac12 \|\rho-\rho'\|_1 \leq \epsilon.
  \]
  More generally, for a bipartite state $\rho^{AB}$,
  \[\begin{split}
    H_{\min}(A|B)_\rho &:=   -\log \min \lambda \quad\text{s.t.}\quad 
                                   \rho^{AB} \leq \lambda(\1^A\ox\sigma^B),\ \sigma \text{ state} \\
                       &\geq -\log \left\| (\1\ox\rho^B)^{-1/2} \rho (\1\ox\rho^B)^{-1/2} \right\|
                              =: H_{\infty}(A|B)_\rho,
  \end{split}\]
  and
  \[
    H_{\min}^\epsilon(A|B)_\rho := \max H_{\min}(A|B)_{\rho'} 
                                      \quad\text{s.t.}\quad \frac12 \|\rho-\rho'\|_1 \leq \epsilon.
  \]
\end{definition}

\begin{remark}
  \label{rem:min-entropy}
  \normalfont
  Unlike the nowadays standard definition of smooth (conditional) min-entropy,
  which uses the so-called \emph{purified distance}~\cite{Tomamichel:PhD},
  we employ the trace distance. This is essentially equivalent, since
  the two metrics are dominating each other (in fact, trace distance is
  upper bounded by the purified distance). However, it makes for more
  direct application of classical randomness extraction results later.
  
  Note that for a qc-state $\rho^{AB} = \sum_j q_j \rho_j^A \ox \proj{j}^B$,
  \begin{align*}
    H_{\min}(A|B)_\rho                   &\geq \min_j H_{\min}(\rho_j), \\
    H_{\min}^{\epsilon+\delta}(A|B)_\rho &\geq \min_{j\in\cT} H_{\min}^\epsilon(\rho_j),
  \end{align*}
  for any set $\cT$ of indices with $\Pr\{j\not\in\cT\} \leq \delta$.
\end{remark}

\begin{corollary}
  \label{cor:accessible-conditional-Hmin}
  Let $\{p_i,\rho_i\}$ be an ensemble on a Hilbert space $E$ with
  associated average state $\overline{\rho}$ and POVM 
  $(M_i = \overline{\rho}^{-\frac12}p_i\rho_i\overline{\rho}^{-\frac12})$.
  Then, for any POVM $Q=(Q_j)$ on $E^n$, i.i.d.~$I_1,\ldots,I_n \sim (p_i)$
  and $0 < \delta < 1$,
  \[
    H_{\min}^\epsilon(I^n|J) \geq n\bigl( \widehat{H}(\cM) - \delta \bigr)
                                           - {16(\log d)^2} \frac{1}{\delta} \log\frac{1}{\epsilon},
  \]
  and for sufficiently large $n$,
  \[
    H_{\min}^\epsilon(I^n|J) \geq n \widehat{H}(\cM) - 8(\log d) \sqrt{n \log\frac{1}{\epsilon}}.
  \]
\end{corollary}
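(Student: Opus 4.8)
The plan is to reduce the statement to Proposition~\ref{prop:high-order} by the role-switching identity already used in the proof of Lemma~\ref{lemma:S-acc-add}. Fix a POVM $Q = (Q_j)$ on $E^n$, set $q_j := \tr\bigl(\overline{\rho}^{\ox n}Q_j\bigr)$, and for those $j$ with $q_j > 0$ let $\sigma_j := \frac{1}{q_j}\sqrt{\overline{\rho}^{\ox n}}\,Q_j\,\sqrt{\overline{\rho}^{\ox n}}$, a bona fide density operator on $E^n$ (as in Lemma~\ref{lemma:S-acc-add}, $\sqrt{\overline{\rho}}$ is understood as supported on $\operatorname{supp}\overline{\rho}$, which contains the support of every $\rho_i$). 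Writing $p_{i^n} = \prod_k p_{i_k}$, $\rho_{i^n} = \bigotimes_k \rho_{i_k}$, and using $\bigl(\overline{\rho}^{\ox n}\bigr)^{-1/2} = (\overline{\rho}^{-1/2})^{\ox n}$, one computes
\[
  \Pr\{I^n = i^n,\, J = j\} = p_{i^n}\tr\bigl(\rho_{i^n}Q_j\bigr) = q_j\,\tr\bigl(\sigma_j\,(M_{i_1}\ox\cdots\ox M_{i_n})\bigr).
\]
Since $(M_{i_1}\ox\cdots\ox M_{i_n})$ is precisely the POVM attached by the recipe in the statement to the product ensemble $\{p_{i^n},\rho_{i^n}\}$ on $E^n$, its qc-channel is $\cM^{\ox n}$. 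Hence $\Pr\{J = j\} = q_j$ and, conditioned on $J = j$, the law of $I^n$ is the classical distribution $\cM^{\ox n}(\sigma_j)$.

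It follows that the joint state of $(I^n,J)$ is the cc-state $\sum_{j:\,q_j>0} q_j\,\cM^{\ox n}(\sigma_j)\ox\proj{j}$, so the qc-state bound recorded in Remark~\ref{rem:min-entropy} gives
\[
  H_{\min}^\epsilon(I^n|J) \geq \min_{j:\,q_j>0} H_{\min}^\epsilon\bigl(\cM^{\ox n}(\sigma_j)\bigr).
\]
Now apply Proposition~\ref{prop:high-order} to the qc-channel $\cM$ on $E$ (so $d = \dim E$) and, for each $j$ separately, to the $n$-copy input $\sigma_j$; since that proposition bounds $H_{\min}^\epsilon\bigl(\cM^{\ox n}(\psi)\bigr)$ for \emph{any} state $\psi$ on $E^n$, every term on the right obeys $H_{\min}^\epsilon\bigl(\cM^{\ox n}(\sigma_j)\bigr) \geq n(\widehat{H}(\cM)-\delta) - 16(\log d)^2\frac{1}{\delta}\log\frac{1}{\epsilon}$, and likewise $\geq n\widehat{H}(\cM) - 8(\log d)\sqrt{n\log(1/\epsilon)}$ for the optimized choice of $\delta$ when $n$ is large. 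Taking the minimum over $j$ and chaining with the previous display yields both claimed inequalities.

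I do not expect a serious obstacle: the corollary is essentially a packaging of Proposition~\ref{prop:high-order}, and the one thing to verify carefully is the switching identity above (that the post-measurement distribution of $I^n$ is genuinely of the form $\cM^{\ox n}$ applied to a state), which is a direct transcription of the computation in the proof of Lemma~\ref{lemma:S-acc-add}. The only other point worth a word is that the $\sigma_j$ are in general mixed while $\widehat{H}(\cM)$ is usually thought of via pure inputs; but Proposition~\ref{prop:high-order} is already stated for arbitrary states on the $n$ input systems (and in any case extends to mixed inputs, since $H_{\min}^\epsilon$ of a mixture of output distributions is at least the minimum over the components), so nothing further is needed.
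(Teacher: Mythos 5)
Your proposal is correct and follows essentially the same route as the paper: the ensemble/POVM role-switching identity $\Pr\{I^n=i^n,J=j\}=q_j\tr\bigl(\sigma_j(M_{i_1}\ox\cdots\ox M_{i_n})\bigr)$, the reduction $H_{\min}^\epsilon(I^n|J)\geq\min_j H_{\min}^\epsilon\bigl(\cM^{\ox n}(\sigma_j)\bigr)$ via the qc-state bound of Remark~\ref{rem:min-entropy}, and then Proposition~\ref{prop:high-order} applied to each $\sigma_j$. Your extra remarks on the support of $\overline{\rho}$ and on mixed inputs $\sigma_j$ are correct observations the paper leaves implicit.
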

\begin{proof}
Simply use the trick to switch between ensembles and POVMs, to write
\[\begin{split}
  \Pr\{I^n=i^n,J=j\} &= p_{i_1}\cdots p_{i_n} \tr(\rho_{i_1}\ox\cdots\ox\rho_{i_n})Q_j \\
                     &= \tr\bigl(\sqrt{\overline{\rho}}^{\ox n} Q_j \sqrt{\overline{\rho}}^{\ox n}\bigr)
                            (M_{i_1}\ox\cdots\ox M_{i_n})                               \\
                     &= q_j \tr \sigma_j(M_{i_1}\ox\cdots\ox M_{i_n}).
\end{split}\]
Thus (cf.~Remark~\ref{rem:min-entropy}),
\[\begin{split}
  H_{\min}^\epsilon(I^n|J) &\geq \min_j H_{\min}^\epsilon(I^n|J=j) \\
                           &=    \min_j H_{\min}^\epsilon\bigl( \cM^{\ox n}(\sigma_j) \bigr) \\
                           &\geq \min_{\psi} H_{\min}^\epsilon\bigl( \cM^{\ox n}(\psi) \bigr), \\
\end{split}\]
and the claim follows from the lower bound of Proposition~\ref{prop:high-order}.
\end{proof}

\begin{theorem}
  \label{thm:main}
  For any Schur multiplier, i.e.~a Hadamard channel $\cN:\cL(A)\longrightarrow\cL(B)$ whose
  complementary channel $\cN^c:\cL(A)\longrightarrow\cL(E)$ is a cq-channel,
  $\cN^c(\ketbra{i}{j}) = \delta_{ij} \rho_i$, consider a distribution
  $(p_i)$ on the input computational basis states. Let 
  $(M_i = \overline{\rho}^{-\frac12} p_i\rho_i \overline{\rho}^{-\frac12})$
  be the POVM associated to the ensemble $\{p_i,\rho_i\}$ and denote
  the corresponding qc-channel $\cM:\cL(E) \longrightarrow \cL(A)$.
  Then,
  \[
    L_W(\cN) \geq \widehat{H}(\cM) = \min_{\psi\text{ state}} H(\cM(\psi)).
  \]
\end{theorem}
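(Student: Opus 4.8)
The plan is to construct, for every $\gamma>0$ and all large $n$, a weakly-locked $(n,\epsilon)$-code of rate $\frac1n\log N\ge\widehat{H}(\cM)-\gamma$ with $\epsilon,\delta\to0$, exploiting the $o(n)$-bit pre-shared secret key that the definition of $L_W$ permits. The code words are computational basis states: by~(\ref{eq:our-channel}), $\cN^{\ox n}(\proj{i^n})=\proj{i^n}^{A^n}\ox\bigl(\tau_{i_1}\ox\cdots\ox\tau_{i_n}\bigr)$ with $\tau_i=\tr_E\proj{\psi_i}$, so Bob recovers $i^n$ with certainty by measuring the $A^n$-register in the computational basis. Fix the distribution $(p_i)$ and the high-probability set $\cT$ of $(p_i)^{\ox n}$-typical strings (so that the uniform law on $\cT$ is $2^{-\Omega(n)}$-close to $(p_i)^{\ox n}$ and has min-entropy $\approx nH(p)\ge n\widehat{H}(\cM)$, the inequality because $\widehat{H}(\cM)=\min_\psi H(\cM(\psi))\le H(\cM(\overline{\rho}))=H(p)$). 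Carry the message by privacy amplification: fix a seeded extractor $\mathrm{Ext}(\cdot,s)\colon\cT\to\{1,\dots,N\}$ with \emph{sublinear} seed length $|s|=o(n)$ extracting a $(1-o(1))$-fraction of the min-entropy of an arbitrary classical source (e.g.\ a Trevisan-type construction), set $\log N=\lfloor n(\widehat{H}(\cM)-\gamma)\rfloor$, and let the seed $s$ be the pre-shared secret. To send $m$, Alice samples $i^n$ uniformly from $\cT\cap\mathrm{Ext}(\cdot,s)^{-1}(m)$ (empty for at most a $2^{-\Omega(\sqrt n)}$-fraction of $m$, where she transmits a fixed default) and sends $\proj{i^n}$; Bob decodes $i^n$ and outputs $\widehat m=\mathrm{Ext}(i^n,s)$. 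Thus $P_{\text{err}}=2^{-\Omega(\sqrt n)}$, the a priori law of $m$ is $2^{-\Omega(\sqrt n)}$-close to uniform, and the induced law of the sent string is $2^{-\Omega(\sqrt n)}$-close to i.i.d.\ $(p_i)$.

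For the security analysis fix any POVM $Q=(Q_j)$ on $E^n$; the decisive point is that, the seed $s$ being secret, $Q$ is chosen in ignorance of $s$. Using the ensemble$\leftrightarrow$POVM switching identity from the proof of Corollary~\ref{cor:accessible-conditional-Hmin}, one sees that for an honest string distributed as $(p_i)^{\ox n}$ the posterior of $I^n$ given the outcome $J=j$ is $\cM^{\ox n}(\sigma_j)$, where $q_j\sigma_j=\sqrt{\overline{\rho}}^{\ox n}Q_j\sqrt{\overline{\rho}}^{\ox n}$ and $\sum_jq_j\sigma_j=\overline{\rho}^{\ox n}$. By Corollary~\ref{cor:accessible-conditional-Hmin}, which rests on Proposition~\ref{prop:high-order} and the additivity of the minimum output entropy of the entanglement-breaking channel $\cM$~\cite{King:additive}, every such posterior satisfies $H_{\min}^\epsilon\bigl(\cM^{\ox n}(\sigma_j)\bigr)\ge n\widehat{H}(\cM)-8(\log d)\sqrt{n\log(1/\epsilon)}$; with $\epsilon=2^{-\sqrt n}$ the right side exceeds $\log N$ by $\Theta(\gamma n)$ while $\log(1/\epsilon)=o(n)$, which is exactly what the extractor needs.

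It remains to turn this per-measurement estimate into the weak-locking condition~(\ref{eq:delta-weakly-locked}). Since that condition averages the signal states over the unknown key, Eve's effective state for message $m$ is $(\cN^c)^{\ox n}(\rho_m)$ with $\rho_m=\EE_s\rho_m^{(s)}$. Applied to the high-min-entropy posteriors $\cM^{\ox n}(\sigma_j)$ and \emph{averaged over the secret seed}, the extractor property guarantees that the distribution of $\mathrm{Ext}(I^n,s)$ is $2^{-\Omega(\sqrt n)}$-close, on average over $s$, to uniform jointly with $J$. Taking $\Omega_j:=\tr\!\bigl((\cN^c)^{\ox n}(\overline{\rho}^{\ox n})Q_j\bigr)$ and pulling $\EE_s$ out through the absolute values in~(\ref{eq:delta-weakly-locked}) by the triangle inequality, convexity of the trace norm bounds the left-hand side of~(\ref{eq:delta-weakly-locked}) by $\EE_s$ of the single-seed $M$--$J$ correlation, which is $\le2^{-\Omega(\sqrt n)}=:\delta$. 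Letting $\gamma\to0$ gives $L_W(\cN)\ge\widehat{H}(\cM)$.

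I expect the crux to be precisely this last step: the code is frozen before $Q$ is chosen, yet~(\ref{eq:delta-weakly-locked}) must hold for \emph{every} POVM $Q$ on the $(\dim E)^n$-dimensional space $E^n$. A union bound over an $\eta$-net of measurements is hopeless, as the net is doubly exponential in $n$ while the Leftover-Hash error stays only subexponentially small unless one sacrifices a constant fraction of the rate; and, more fundamentally, a bound holding against the full quantum side information $E^n$ rather than against each of its measurements separately would amount to privacy, contradicting the intended separation $L_W\gg P$. The argument survives only because the extractor seed is a shared secret: Eve's measurement cannot depend on it, so from her viewpoint the seed randomness is fresh, and convexity upgrades the per-measurement Leftover-Hash estimate to a uniform one. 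Finally, optimising over $(p_i)$ and combining with Proposition~\ref{prop:upper-bound} and Lemma~\ref{lemma:minimax} would pin $L_W(\cN)$ down exactly, modulo the additivity conjecture for the constrained minimum output entropy.
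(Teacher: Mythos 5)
Your overall strategy is the one the paper uses: encode into computational basis states (so Bob decodes perfectly), bound Eve's conditional smooth min-entropy for an arbitrary POVM via Corollary~\ref{cor:accessible-conditional-Hmin}, apply a strong seeded extractor whose $o(n)$ seed is the pre-shared secret, and invert the extractor to turn key generation into message transmission. The obstacle you flag at the end (a union bound over POVMs) is a non-issue, and you in fact resolve it correctly: Corollary~\ref{cor:accessible-conditional-Hmin} is a worst-case statement over $Q$ and the strong-extractor guarantee is worst-case over sources, so each fixed $Q$ is handled separately by a bound uniform in $Q$, with no net or union bound required.

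There is, however, a genuine flaw in your inversion step. You sample the codeword uniformly from $\cT\cap\mathrm{Ext}(\cdot,s)^{-1}(m)$ and justify this by asserting that the uniform law on the typical set $\cT$ is $2^{-\Omega(n)}$-close to $(p_i)^{\ox n}$. That is false for non-uniform $(p_i)$: although $\Pr[\cT]=1-2^{-\Omega(n)}$, the i.i.d.\ measure restricted to $\cT$ is far from uniform (individual string probabilities vary by factors $2^{\Theta(n)}$ across the typicality window, and already the empirical type has total variation distance of order one between the two laws). Consequently the ensemble your encoder actually transmits is not close to the i.i.d.\ one, so Corollary~\ref{cor:accessible-conditional-Hmin} --- whose proof rests on the ensemble/POVM switching with the product POVM $M_{i_1}\ox\cdots\ox M_{i_n}$, hence on a product prior --- does not apply to it, and your claimed $2^{-\Omega(\sqrt n)}$ closeness of the transmitted string's law to i.i.d.\ fails. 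The paper avoids this by running the extractor backwards \emph{exactly}: the encoder is the conditional distribution $E(i^n|k,s)=\mathbb{P}(I^n=i^n\mid K=k,S=s)$ computed from the genuine i.i.d.\ source with $K=e(I^n,S)$; since the extractor makes $\mathbb{P}(K,S)$ $\eta$-close to uniform, feeding in a uniform $(\widehat K,\widehat S)$ reproduces the i.i.d.\ joint law of $(I^n,K,S)$ up to $\eta$ in trace norm, and the security estimate transfers by monotonicity under the encoding, the complementary channel and Eve's POVM, plus the triangle inequality. Replace your uniform-on-preimage sampling by this exact conditional and drop $\cT$ altogether; the rest of your argument then goes through as in the paper.
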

\begin{proof}
We first describe a secret key generation protocol in the sense of
weak locking: Alice generates i.i.d.~$I_1,\ldots,I_n \sim (p_i)$ and
sends the basis states $\ket{I_1}\cdots\ket{I_n}$ down the channel;
Bob, by the nature of the channel, receives these basis states
without noise, and so the string $I^n=I_1\ldots I_n$ serves as
a raw key shared between them.

Eve on the other hand, after measuring a POVM $Q$ on her output states
and obtaining outcomes $J$, has a certain min-entropy of $I^n$ given
$J$, which by Corollary~\ref{cor:accessible-conditional-Hmin} satisfies
\[
  H_{\min}^\epsilon(I^n|J) \geq n\bigl( \widehat{H}(\cM) - \delta \bigr)
                                 - {16(\log d)^2} \frac{1}{\delta} \log\frac{1}{\epsilon}.
\]
Thus, using a min-entropy extractor with $O(\log n)$ bits of ``seed''
randomness (which Alice and Bob are allowed as part of the sublinear
amount of key they may pre-share), they can convert almost all of
the smooth min-entropy into almost-uniform key $K$ that is almost-independent
of $J$; cf.~\cite[Section 6.2]{Vadhan:randomness} and references therein.
Mathematically, the extractor (more precisely: \emph{strong extractor})
is given by a function
$e:\cI^n \times \cS \longrightarrow \cK = \{0,1\}^{nR}$, 
$R = \widehat{H}(\cM) - 2\delta$ and $|\cS| = \text{poly}(n)$. 
It has the property that for every random variable $I^{(n)} \sim P^{(n)}$
on $\cI^n$ with min-entropy 
$\geq n(\widehat{H}(\cM) - \delta) - {16(\log d)^2} \frac{1}{\delta} \log\frac{1}{\epsilon}$
and uniformly distributed $S \in \cS$, $K = e(I^{(n)},S)$ is almost uniformly
distributed:
\begin{equation}
  \label{eq:extractor}
  \left\| \mathbb{P}(K,S) - \mathbb{U}_{\cK}\otimes\mathbb{U}_{\cS} \right\|_1 
                                                             \leq \frac{1}{\text{poly}(n)},
\end{equation}
where $\mathbb{U}_{\cK}\otimes\mathbb{U}_{\cS}$ is the uniform distribution
on $\cK \times \cS$.
This implies by triangle inequality, for Eve's measurement result $J$,
\begin{equation}
  \label{eq:weak-key}
  \left\| \mathbb{P}(J,K,S) 
           - \mathbb{P}(J) \otimes \mathbb{U}_{\cK} \otimes \mathbb{U}_{\cS} \right\|_1 
                                          \leq \eta := \epsilon + \frac{1}{\text{poly}(n)}.
\end{equation}
Observe that the bound $\frac{1}{\text{poly}(n)}$ comes from adding
the error terms $\frac{1}{\text{poly}(|\cS|)}$ and $2^{-n\Omega(\delta)}$
of the extractor~\cite{Vadhan:randomness}. 
Thus, making the seed space $\cS$ larger we can
suppress $\eta$ more, up to any quantity decaying to zero slower than
exponentially.

Now, to obtain a scheme to securely send uniformly distributed messages 
from $\cK$, we ``run the extractor backwards'': From the joint distribution
of $I^n$ [i.i.d.~according to $(p_i)$], $S$ (uniform) and $K = e(I^n,S)$ we 
can construct a conditional distribution 
$\mathbb{P}(I^n|K,S) =: E(i^n|k,s)$, which describes a stochastic encoding 
mapping $E: \cK \times \cS \longrightarrow \cI^n$. Note that we may
assume $I^n = E(K,S)$ as random variables.

To send the uniformly distributed message $\widehat{K} \in \cK$, Alice
and Bob share a uniformly distributed private $\widehat{S} \in \cS$, and Alice
puts $\widehat{I}^{(n)} = E(\widehat{K},\widehat{S}) \in \cI^n$.
We claim that this is a good code. Indeed, since he gets $\widehat{I}^{(n)}$
from the channel output, and using $\widehat{S}$, Bob can decode
$\widehat{K} = e(\widehat{I}^{(n)},\widehat{S})$ with certainty.

On the other hand, Eve can obtain almost no information about $\widehat{K}$,
because eq.~(\ref{eq:extractor}) means 
\[
  \left\| \mathbb{P}(K,S) - \mathbb{P}(\widehat{K},\widehat{S}) \right\|_1 \leq \eta,
\]
hence, applying the encoding map $E$,
\[
  \left\| \mathbb{P}(I^n,K,S) 
           - \mathbb{P}(\widehat{I}^{(n)},\widehat{K},\widehat{S}) \right\|_1 \leq \eta.
\]
Applying the complementary channel as well as Eve's POVM $Q$, we find
\[
  \left\| \mathbb{P}(J,K,S) - \mathbb{P}(\widehat{J},\widehat{K},\widehat{S}) \right\|_1 \leq \eta.
\]
Putting this together with eq.~(\ref{eq:weak-key}), and tracing out the seed,
we finally obtain
\[
  \left\| \mathbb{P}(\widehat{J},\widehat{K}) 
            - \mathbb{P}(J) \otimes \mathbb{U}_{\cK} \right\|_1 \leq 2 \eta,
\]
and letting $\epsilon$ and $\delta$ go to zero (slow enough) as
$n \rightarrow \infty$, we are done.
\end{proof}

\medskip
\begin{remark}
  \normalfont
  By using a seed of $o(n)$ bits in eq.~(\ref{eq:extractor}), and
  choosing $\epsilon = 2^{-o(n)}$ in eq.~(\ref{eq:weak-key}), we
  get $\eta$-weak-locking codes, with asymptotically the same rate
  and $\eta = 2^{-o(n)}$.
\end{remark}

\medskip
\begin{example}
  \label{example:MUB}
  \normalfont
  Consider $|E|=d$, $|A|=|B|=2d$ and the cq-channel $\cN^c$ with pure
  output states $\ket{v_{0i}} = \ket{i}$ and $\ket{v_{1i}} = \ket{\varphi_i}$ 
  ($i=1,\ldots,d$), which are the eigenstates of the generalized $Z$ and 
  $X$ operators, respectively.

  Using the concavity of the coherent information and the
  covariance of the channel under the action of the discrete Weyl group,
  it is easy to see that the coherent information is maximized 
  for the uniform input, and so
  \begin{equation}
    \label{eq:P=1}
    P(\cN) = 1.
  \end{equation}
  
  On the other hand, for uniform input distribution over the
  $2d$ basis states, the POVM 
  $\left(\frac12 M_{0i}\right) \cup \left(\frac12 M_{1i}\right)$, where
  $M_{bi} = \proj{v_{bi}}$, is the
  random choice of one of the observables $X$ or $Z$, and
  measurement of its eigenbasis. By the Maassen-Uffink entropic
  uncertainty relation~\cite{MaassenUffink},
  $\widehat{H}(\cM) = 1 + \frac12 \log d$, which by Theorem~\ref{thm:main}
  is a lower bound on $L_W(\cN)$. By Proposition~\ref{prop:upper-bound}
  it is also an upper bound, since regardless of the input distribution
  over the computational basis states $i0$, $i1$, Eve can randomly
  choose and measure either $X$ or $Z$, and get an accessible 
  equivocation of at most $1 + \frac12 \log d$.
  Hence,
  \begin{equation}
    \label{eq:L_W-big}
    L_W(\cN) = 1 + \frac12 \log d;
  \end{equation}
  and thus the gap between the private and (weak) locking capacities 
  of a $d$-dimensional channel 
  can be as large as a constant versus $\Omega(\log d)$.
  
  The Choi-Jamio\l{}kowski state obtained from using the above channel
  with maximally entangled input was previously considered 
  by Christandl \emph{et al.}~\cite[Sec.~6]{ChristandlEHHOR},
  finding the same numbers for the secret key rate as eqs.~(\ref{eq:P=1})
  and (\ref{eq:L_W-big}) as secret key rate against quantum and
  classical eavesdropper, respectively. Note however that for the
  latter conclusion they have to assume that Eve applies the same
  measurement to each copy of the shared state. The proof of
  Theorem~\ref{thm:main} shows that the conclusion of~\cite{ChristandlEHHOR}
  holds for arbitrary measurements of the $n$ systems in Eve's
  possession.
  \qed
\end{example}

\medskip
In the next section, we shall exhibit an example of an even more
striking effect: a channel whose private capacity, and indeed key
generation capacity, is zero, because Bob and Eve see the exact
same quantum information, but whose locking capacity is arbitrarily
large.

\section{Symmetric channels with $\mathbf{L_W > 0 = P}$}
\label{sec:P=0}
Compared to Section~\ref{sec:main}, are there even channels with vanishing private
capacity, \emph{i.e.}~$P(\cN) = 0$, but positive locking capacity, $L_W(\cN) > 0$?
Note that the construction in the previous section, to yield non-zero 
locking capacity,
requires a degradable Hadamard channel, and so its private capacity
is also non-zero. In this case, note also that the sublinear 
pre-shared key between Alice and Bob is unnecessary, since they can
use a sublinear number of channel uses and a private code to
create the key from scratch.

In this section we consider symmetric channels,
which trivially have $P = 0$; on the other hand, the pre-shared
key, even if only sublinear, can be enough of an advantage to
get a locking capacity.
For concreteness, let us look a little closer at the channel
\begin{equation}
  \label{eq:symmetric}
  \cS:\cL\bigl(\text{Sym}^2(B)\bigr) \longrightarrow \cL(B),
\end{equation}
with $B \simeq E \simeq \CC^d$,
which has as its Stinespring dilation the isometric embedding of
the $d\times d$ symmetric subspace $A = \text{Sym}^2(B) \simeq \CC^{d(d+1)/2}$
into $B \otimes E$.

One ``reasonable'' strategy to encode information is this:
Use the product states $\ket{\psi}\ket{\psi} \in A$, which 
yield the same pure output state $\ket{\psi}$ for both Bob and
Eve, or rather sequences of such state on $n$ channel uses.
%
Now, similar to the protocol in Section~\ref{sec:main}, use 
input states which result in either $Z$-basis or $X$-basis 
eigenstates output (equally for Bob and Eve, obviously), on 
small blocks of size $k$ in $n$ transmissions, so that only 
$\frac{n}{k}$ bits of key are
required for both Alice and Bob to know the basis and to have
a perfect communication channel. On the other hand, Eve, without
this bit of basis information faces uniformly random states in
one of two mutually unbiased bases in dimension $d^k$, namely
the $Z$ eigenstates $\ket{i^k}$ and the $X$ eigenstates $\ket{\varphi_{i^k}}$. 
Hence, for any POVM $Q=(Q_j)$ on $E^k$,
\[
  H(I|J) \geq \min_\psi S\bigl(\cM(\psi)\bigr) = 1 + \frac12 k \log d,
\]
and so we can invoke Proposition~\ref{prop:high-order} and 
Corollary~\ref{cor:accessible-conditional-Hmin}: For $\ell = \frac{n}{k}$ 
uses of this scheme we obtain, for the measurement outcomes $J$ of 
an arbitrary POVM $(Q_j)$,
%
%
%
\[\begin{split}
  H_{\min}^{\epsilon}(I^\ell|J) &\geq n\left(\frac12 - \frac{\delta}{k}\right)\log d 
                                        + \frac{n}{k} 
                                        - \frac{16 k^2(\log d)^2}{\delta} \log\frac{1}{\epsilon}  \\
                                &\geq n\left(\frac12 - \frac{\delta}{k}\right)\log d,
\end{split}\]
where for the last line we have made the choice
\[
  k = \sqrt[3]{n\frac{\delta}{16 (\log d)^2 \log\frac{1}{\epsilon}}}.
\]

Now the argument progresses as in Section~\ref{sec:main}: on top of the 
$\frac{n}{k}$ bits of key, we use another $o(n)$ 
bits for the randomness extractor. The key rate goes to zero as long 
as $k \longrightarrow \infty$.
We see that we can achieve the locking rate $\frac12 \log d$, 
and even let $\frac{\delta}{k}$ and $\epsilon$ go to $0$ sufficiently slowly: 
for instance, constant $\delta$ and $\epsilon = 2^{-n^\gamma}$, 
with any $\gamma < 1$. 
Thus we have proved the following theorem.

\begin{theorem}
  \label{thm:P=0}
  The symmetric subspace channel $\cS:\cL(\text{Sym}^2(\CC^d))\longrightarrow \cL(\CC^d)$
  has zero private capacity, $P(\cS) = 0$ (since it gives a copy of every
  output state of Bob to Eve), but $L_W(\cS) \geq \frac12 \log d$.
  \qed
\end{theorem}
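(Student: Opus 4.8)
The plan is to prove the two assertions separately: $P(\cS)=0$ by a symmetry (cloning) argument, and $L_W(\cS)\ge\tfrac12\log d$ by the blockwise mutually-unbiased-bases scheme sketched above, whose key estimate is already packaged in Corollary~\ref{cor:accessible-conditional-Hmin}. For the first part: let $V:\text{Sym}^2(\CC^d)\hookrightarrow B\ox E$ be the defining isometric embedding. The flip operator $\mathbb{F}_{BE}$ acts as the identity on $\text{Sym}^2(\CC^d)$, so $\mathbb{F}_{BE}V=V$, whence $\cS(\rho)=\tr_E V\rho V^\dagger=\tr_E\mathbb{F}_{BE}V\rho V^\dagger\mathbb{F}_{BE}=\tr_B V\rho V^\dagger=\cS^c(\rho)$ under the identification $B\simeq E$. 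Thus $\cS$ is self-complementary, in particular symmetric in the sense of the paper, and the cloning argument~\cite{BDSW:bigpaper} gives $Q(\cS)=P(\cS)=0$ (no secret-key-generation code of positive rate can exist, since Eve holds a copy of Bob's output).

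For the lower bound, fix a block length $k$ and set $\ell=n/k$. Encode with the diagonal inputs $\ket{\psi}\ket{\psi}\in\text{Sym}^2(\CC^d)$, which produce the same pure output $\ket\psi$ for Bob and for Eve; on each block of $k$ channel uses, Alice and Bob spend one pre-shared key bit to agree whether the $k$ outputs are $Z$-eigenstates $\ket{i^k}$ or $X$-eigenstates $\ket{\varphi_{i^k}}$ of the generalized Pauli operators on $\CC^d$. Bob, knowing the basis of each block, recovers the raw symbol $I$ of the block (the basis bit plus the $d^k$-ary string) with certainty. Eve, lacking the per-block basis bit, faces the uniform ensemble on $E^k$ over the $2d^k$ states $\{\ket{i^k}\}\cup\{\ket{\varphi_{j^k}}\}$; since tensor powers of mutually unbiased bases are again mutually unbiased in dimension $d^k$, the associated POVM is ``flip a coin, then measure $Z^{\ox k}$ or $X^{\ox k}$'', so by Maassen--Uffink~\cite{MaassenUffink},
\[
  \widehat H(\cM)=\min_{\psi\text{ state}} H(\cM(\psi))=1+\tfrac12 k\log d.
\]

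Now apply Corollary~\ref{cor:accessible-conditional-Hmin} to the $\ell$ i.i.d.\ blocks (per-block dimension $2d^k$, so the ``$\log d$'' of the corollary is at most $k\log d$): for any POVM $(Q_j)$ on $E^n$ with outcome $J$,
\[
\begin{aligned}
  H_{\min}^\epsilon(I^\ell|J)
    &\ge \ell\bigl(\widehat H(\cM)-\delta\bigr)-\frac{16k^2(\log d)^2}{\delta}\log\frac1\epsilon \\
    &= n\Bigl(\tfrac12-\tfrac{\delta}{k}\Bigr)\log d+\frac nk-\frac{16k^2(\log d)^2}{\delta}\log\frac1\epsilon
     \ \ge\ n\Bigl(\tfrac12-\tfrac{\delta}{k}\Bigr)\log d,
\end{aligned}
\]
where the last inequality uses the choice $k=\bigl(n\delta/(16(\log d)^2\log\tfrac1\epsilon)\bigr)^{1/3}$, which makes the final subtracted term equal to $n/k$ and cancels the $+n/k$. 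From here the argument is identical to the proof of Theorem~\ref{thm:main}: consume a further $o(n)$ pre-shared bits as the seed of a strong min-entropy extractor to distill $\approx n(\tfrac12-\tfrac{\delta}{k})\log d$ bits of key that are almost uniform and almost independent of $J$, then ``run the extractor backwards'' to obtain a $2\eta$-weak-locking wiretap code for uniform messages with $\eta=\epsilon+\tfrac1{\mathrm{poly}(n)}$. The key consumed is $\tfrac nk+o(n)$ bits, sublinear once $k\to\infty$, and the rate tends to $\tfrac12\log d$ provided $\delta/k\to0$; taking, e.g., $\delta$ constant and $\epsilon=2^{-n^\gamma}$ for any $\gamma<1$ gives $k=\Theta(n^{(1-\gamma)/3})\to\infty$ and exponentially small error and security parameters.

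I expect the only delicate points to be (i) verifying that the MUB structure genuinely survives the tensor power, so that $\widehat H(\cM)=1+\tfrac12 k\log d$ is exact (this follows from $|\braket{i^k}{\varphi_{j^k}}|^2=d^{-k}$ together with Maassen--Uffink in dimension $d^k$), and (ii) the parameter bookkeeping ensuring the $n/k$ ``bonus'' absorbs the extractor error term while keeping the total pre-shared key sublinear and the rate converging to $\tfrac12\log d$. Both are routine once set up; the structural work of turning ``uniform over two MUBs'' into a high smooth min-entropy against \emph{arbitrary} joint measurements on Eve's $n$ systems is exactly what Corollary~\ref{cor:accessible-conditional-Hmin} already supplies, so no new machinery is needed.
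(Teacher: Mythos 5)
Your proposal is correct and takes essentially the same route as the paper: the same blockwise encoding into two mutually unbiased bases with $n/k$ pre-shared basis bits, the Maassen--Uffink evaluation $\widehat H(\cM)=1+\tfrac12 k\log d$, Corollary~\ref{cor:accessible-conditional-Hmin} with the identical choice of $k$ and the same parameter bookkeeping, followed by the run-the-extractor-backwards step of Theorem~\ref{thm:main}. The only (welcome) addition is that you make the self-complementarity argument for $P(\cS)=0$ explicit via the flip operator, where the paper simply asserts it.
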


\medskip
This result dramatically improves the argument in~\cite{KRBM-locking}
and~\cite{ChristandlEHHOR}:
Here we have a regular quantum cryptographic system, which could be
BB84 sending a copy of each of the four states not only to Bob but
also to Eve, very much like in the ``photon number splitting 
attack''~\cite{HuttnerImotoGisinMor} -- hence there can be no 
private communication capacity
in the present setting of only a sublinear pre-shared key. But it can be
perfectly good key as judged by the accessible information of Eve. 
%

\begin{remark}
  \normalfont
  It may even be possible to show that for large enough $d$, the locking
  capacity $L_W(\cS)$ can be arbitrarily close to $C(\cN) = \log d$,
  by using encodings into $m > 2$ many bases. What we would
  need is that repeating the basis a small number of times would still
  result in a ``strong'' uncertainty relation (cf.~\cite{WehnerWinter:review}),
  as it was shown for $k=1$ in~\cite{HLSW:random} and~\cite{FHS:geometric-uncertainty}.
  
  To be precise, denote the bases $(U_t\ket{i})_{i=1}^d$, with unitaries
  $U_t$ ($t=0,\ldots,m$), so that we get $m$ ``repeated'' bases 
  $U_t^{\ox k}$, each of which defines
  an orthogonal measurement 
  $\bigl( M^{(t)}_{i_k} = U_t^{\ox k}\proj{i^k} {U_t^{\ox k}}^\dagger \bigr)$.
  The question then is, whether it is possible to find $U_t$ such that
  for all states $\psi$ on $(\CC^d)^{\ox k}$,
  \[
    \frac{1}{m} \sum_{t=1}^m H\bigl(\{\tr\psi M^{(t)}_{i^k}\}_{i^k}\bigr)
                                                         \geq c(k,m) \log d^k,
  \]
  with $c(k,m) \rightarrow 1$ as $k \rightarrow \infty$ and $m \leq 2^{k^c}$
  for some $0<c<1$. With such an uncertainty relation in hand, we could go 
  through the proof of Theorem~\ref{thm:P=0}, letting $n = \text{poly}(k)$ as before
  and thus using $\sim \frac{n}{k}k^c = o(n)$ bits of key, while attaining
  a locking rate of $c(k,m) \log d$.
\end{remark}

\section{Conclusion and possible further developments}
\label{sec:further}
Our results on separations between $P(\cN)$ and $L_W(\cN)$, 
Theorems~\ref{thm:main} and~\ref{thm:P=0}, and Example~\ref{example:MUB}, 
have an interpretation in terms of quantum key distribution (QKD):
$\cN$ may be the effective channel between Alice and Bob if
Eve applies a so-called \emph{collective attack}, the same and 
known isometry $V$ to all transmissions. In fact, the security
definition of the weak locking capacity is the old-style notion
put forward in the very first complete analyses of BB84 and related
protocols~\cite{Mayers:BB84}, cf.~the historical account in the nice
review~\cite{Scarani-review} (footnote 20). Indeed, in these older texts,
it was assumed that it is enough to bound the ``knowledge of Eve
about the key'', understood as the (Shannon) information she can 
obtain by making a suitable measurement after collecting all sorts
of quantum and classical systems during the protocol. The definition
of $L_W(\cN)$ is essentially based on taking this notion of cryptographic
security of the key literally.
It was only with the discovery of quantum information 
locking~\cite{I-locking} and its subsequent 
development~\cite{HLSW:random,FHS:geometric-uncertainty} that
it was eventually understood that this is a very badly behaved 
security criterion, in particular not composable, and subject
to chosen plaintext attacks (where the eavesdropper has side-information 
about the message to be transmitted)~\cite{KRBM-locking,ChristandlEHHOR}. 
The timing problem of \emph{when} the measurement should take place,
and hence whether side-information becomes available before or
after it, is at the heart of this issue, and has been investigated
in its own right~\cite{SIPI}.
The new, modern, information theoretic security definition~\cite{Renner:PhD}
is at the basis of the notion of private capacity $P(\cN)$.

Furthermore, Theorem~\ref{thm:P=0} shows that it is possible for
the locking capacity to be positive where ``evidently'', due to the
symmetry of the channel between legal and eavesdropping users, there
can be no secrecy. The coding scheme may even be interpreted in the
context of the famous \emph{photon number splitting attack} on
coherent state based QKD protocols~\cite{HuttnerImotoGisinMor}:
the protocol of Theorem~\ref{thm:P=0} is as if Alice \emph{always}
prepares a state of two photons, in fact two identical copies of her
chosen polarization -- and naturally Bob and Eve each get one.

\medskip
The main open question about the Hadamard channels considered
in Section~\ref{sec:main} is, whether $S_{\text{acc}}(I|E)$,
or in other words, the constrained minimum output entropy of
the associated POVM for a given distribution $(p_i)$ of the
inputs,
\[
  \widehat{H}(\cM|\sigma) 
        = \min_{\{q_j,\psi_j\}} \sum_j q_j H\bigl(\cM(\psi_j)\bigr)
                    \quad\text{s.t.}\quad \sum_j q_j \psi_j = \sigma,
\]
is an achievable locking rate.

The obvious first step to try would be
to consider the R\'{e}nyi entropic version of this,
\[
  \widehat{H}_\alpha(\cM|\sigma) = \min H_\alpha(I|J) 
                           \quad\text{s.t.}\quad \sum_j q_j \psi_j = \sigma,
\]
where
\[
  H_\alpha(I|J) 
  = - \frac{\alpha}{\alpha-1} 
      \log \left( \sum_j q_j \left( \sum_i (\tr \psi_j M_i)^\alpha \right)^{1/\alpha} \right)
\]
is the conditional $\alpha$-entropy (cf.~\cite[Def.~4]{Mueller-Lennert-et-al},
where the classical case is attributed to Arimoto~\cite{Arimoto}). 
Note that it relates to the
smooth conditional min-entropy and the conditional von Neumann entropy in analogous
ways as the non-conditional versions~\cite{Tomamichel:PhD}, here stated
as Lemmas~\ref{lemma:H-min-H-alpha} and \ref{lemma:H-alpha-S} in
Appendix~\ref{app:simple}.

What is missing is an additivity proof. So this is the question: for
two POVMs $M^{(1)}$ and $M^{(2)}$, and states $\sigma_1$ and $\sigma_2$,
does it hold that
\begin{equation}
  \label{eq:constrained-alpha-add}
  \widehat{H}_\alpha(\cM^{(1)}\ox\cM^{(2)}|\sigma_1\ox\sigma_2) 
    = \widehat{H}_\alpha(\cM^{(1)}|\sigma_1) + \widehat{H}_\alpha(\cM^{(2)}|\sigma_2) \, ?
\end{equation}
[Note that ``$\leq$'' is trivially true.] If that is the case, we are
done, by substituting $\widehat{H}_\alpha(\cM|\sigma)$ for the
simpler, and smaller, $\widehat{H}_\alpha(\cM)$, in the proof of
Theorem~\ref{thm:main}. In the limit $\alpha\rightarrow 1$ this
is true by~\cite[Lemma~3]{King:additive}; see also the proof of our
Lemma~\ref{lemma:S-acc-add}.

This would give the weak locking capacity for those channels,
since the achievable rate, optimized over all input distributions,
\emph{i.e.}~$\displaystyle{L_W(\cN) \geq \max_{(p_i)} S_{\text{acc}}(I|E)}$,
would then match the multi-letter converse 
from~\cite{Guha-et-al:enigma} for these channels, which by
Proposition~\ref{prop:upper-bound} for the present channels 
simplifies to $\displaystyle{L_W(\cN) \leq \max_{(p_i)} S_{\text{acc}}(I|E)}$.

It should be noted, that what we really need is a lower bound
on the smooth min-entropy of the i.i.d.~$I^n$ conditioned on the
measurement outcomes $J$ from the POVM on $E^n$, of the form
\begin{equation}
  \label{eq:constrained-min-entropy}
  H_{\min}^\epsilon(I^n|J) \stackrel{\text{?}}{\gtrsim} n\bigl( S_{\text{acc}}(I|E) - \delta \bigr),
\end{equation}
analogous to~\cite{Damgaard:high-order}. This would be implied by 
eq.~(\ref{eq:constrained-alpha-add}) being true, along the lines
of the proof in Appendix~\ref{app:simple}. 
But even if the additivity fails there might be a direct proof 
of eq.~(\ref{eq:constrained-min-entropy}).

\medskip
Going on to more general channels, we could then approach the problem
of how tight is the upper bound on $L_W(\cN)$ in terms of the regularization of
\[
  L_W^{(u)}(\cN) = \max_{\{p_x,\rho_x\}} I(X:B) - I_{\text{acc}}(X:E).
\]
This seems a difficult question, as it has to be noted
that there is no obvious way how to attain it as a rate for a locking
code. The problem lies in the term $I(X:B)$, which suggests that we 
should select a code for the channel on blocks of length $n$, rather
than i.i.d.~copies $X^n$, cf.~\cite{Devetak:P+Q,CaiWinterYeung:P}. 
But when the i.i.d.~ensemble structure
of the inputs $X^n$ is disrupted, the accessible information 
$I_{\text{acc}}(X^n:E^n)$ can possibly change dramatically, because
of the very locking effect~\cite{I-locking}.

A possible way forward would be to allow the use of another, private,
channel, let's say for concreteness a noiseless channel of sufficiently large 
dimension $k$. Then, Alice can use an ensemble decomposition of the 
i.i.d.~$X^n$ into good codes for the channel $\cN^{\ox n}$ to Bob, 
choose one of them at random and inform him about the choice over 
the auxiliary channel; any $\log k \gtrsim H(X|B)$ will do. Via
the code she can send $I(X:B)$ bits per channel use to Bob. 
For Eve, on the other hand, the noiseless channel $\id_k$ does not
yield any information, and $\cN$ appears to be used with i.i.d.~$X^n$
from her point of view. Then, assuming the additivity hypothesis
(\ref{eq:constrained-alpha-add}), or rather the min-entropy uncertainty
relation (\ref{eq:constrained-min-entropy}) above, this ``raw key'' can 
be hashed down to $S_{\text{acc}}(X|E)$ locked bits per channel use.
By the same argument of ``running the extractor backwards'' as
in the proof of Theorem~\ref{thm:main}, we thus would get
$L_W(\cN\ox\id_k) \geq L_W^{(u)}(\cN) + \log k = L_W^{(u)}(\cN\ox\id_k)$.

This sketch of a proof should suffice to show the following:
\begin{theorem}
  \label{thm:activated-locking-cap}
  If the additivity hypothesis (\ref{eq:constrained-alpha-add}),
  or more specifically, the min-entropy uncertainty relation
  (\ref{eq:constrained-min-entropy}) is true, then for $\cN$
  the complement of a cq-channel,
  \[
    L_W(\cN) = \max_{(p_i)} S_{\text{acc}}(I|E) = L_W^{(u)}(\cN).
  \]
  
  Furthermore, for an arbitrary channel $\cN$, 
  the \emph{activated} (or \emph{amortized}) \emph{weak locking capacity} 
  $\overline{L}_W(\cN) := \sup_k L_W(\cN\ox\id_k) - \log k \geq L_W(\cN)$
  is given by
  \[
    \hspace{5.6cm}
    \overline{L}_W(\cN) = \sup_n \frac1n L_W^{(u)}(\cN^{\ox n}).
    \hspace{5.4cm}\qed
  \]
\end{theorem}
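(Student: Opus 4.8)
I would prove the two assertions separately, in both cases leaning on the proof of Theorem~\ref{thm:main} and on the multi-letter converse $L_W(\cN)\le\sup_n\frac1n L_W^{(u)}(\cN^{\ox n})$ of Guha \emph{et al.}~\cite[Thm.~8]{Guha-et-al:enigma}. For the first assertion, the upper bound $L_W(\cN)\le\max_{(p_i)}S_{\text{acc}}(I|E)$ is exactly Proposition~\ref{prop:upper-bound}, whose proof moreover already yields $L_W^{(u)}(\cN)=\max_{(p_i)}S_{\text{acc}}(I|E)$ (``$\ge$'' by evaluating~(\ref{eq:L_W_upper}) on computational-basis ensembles, ``$\le$'' by its Steps~1--2). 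For the matching lower bound I would re-run the proof of Theorem~\ref{thm:main} with a sharper input estimate: fixing a distribution $(p_i)$ on the basis states and writing $\overline{\rho}=\sum_ip_i\rho_i$, $M_i=\overline{\rho}^{-\frac12}p_i\rho_i\overline{\rho}^{-\frac12}$, the ``switch ensemble against POVM'' identity behind Corollary~\ref{cor:accessible-conditional-Hmin} makes $I^n$ conditioned on Eve's outcome $J=j$ distributed as $\cM^{\ox n}(\sigma_j)$ with $\sum_jq_j\sigma_j=\overline{\rho}^{\ox n}$. Hence the hypothesised min-entropy uncertainty relation~(\ref{eq:constrained-min-entropy}) — which follows from the additivity conjecture~(\ref{eq:constrained-alpha-add}) by sandwiching $\widehat{H}_\alpha(\cM|\overline{\rho})$ between the smooth min-entropy and the von Neumann entropy, exactly as in Appendix~\ref{app:simple} — gives $H_{\min}^\epsilon(I^n|J)\gtrsim n\bigl(S_{\text{acc}}(I|E)-\delta\bigr)$, and feeding this into the extractor and ``running it backwards'' as in Theorem~\ref{thm:main} produces weak-locking codes of rate $S_{\text{acc}}(I|E)-o(1)$; optimising over $(p_i)$ closes the gap.

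For the activated capacity, I would first record the single-letter identity $L_W^{(u)}(\cN\ox\id_k)=L_W^{(u)}(\cN)+\log k$: since the noiseless factor reveals nothing to Eve, for any input ensemble on $A\ox\CC^k$ the term $I_{\text{acc}}(X:E)$ depends only on the $A$-marginal, while $I(X:BB')\le I(X:B)+\log k$, with equality attained by running an independent uniform register through $\id_k$. Combined with the Guha \emph{et al.}~converse and the identity $(\cN\ox\id_k)^{\ox n}=\cN^{\ox n}\ox\id_{k^n}$, this gives $L_W(\cN\ox\id_k)-\log k\le\sup_n\frac1n L_W^{(u)}(\cN^{\ox n})$, hence $\overline{L}_W(\cN)\le\sup_n\frac1n L_W^{(u)}(\cN^{\ox n})$ — a direction that needs no unproven additivity.

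For the reverse inequality I would carry out the scheme sketched just before the theorem, applied to $\cN^{\ox n}$ rather than $\cN$. Taking an ensemble $\{p_x,\rho_x\}$ on $A^n$ nearly optimal for $L_W^{(u)}(\cN^{\ox n})$ and i.i.d.\ copies $X^\ell$, Alice partitions the typical set of $X^\ell$ into $\approx 2^{\ell H(X|B^n)}$ good codes for the cq-channel $x\mapsto\cN^{\ox n}(\rho_x)$, transmits the codeword through $\cN^{\ox n}$ and the label through the noiseless channels (any $\log k\gtrsim H(X|B^n)$ suffices), so that Bob recovers $X^\ell$. The same switching trick applies verbatim to the ensemble $\{p_x,(\cN^c)^{\ox n}(\rho_x)\}$ on $E^n$, which need not be a cq-channel, and under~(\ref{eq:constrained-min-entropy}) gives $H_{\min}^\epsilon(X^\ell|J)\gtrsim\ell\bigl(S_{\text{acc}}(X|E)-\delta\bigr)$ with $S_{\text{acc}}(X|E)=H(X)-I_{\text{acc}}(X:E)$; hashing and running the extractor backwards turns this raw key into locked bits at rate $S_{\text{acc}}(X|E)-o(1)$, i.e.\ $L_W(\cN^{\ox n}\ox\id_k)\ge L_W^{(u)}(\cN^{\ox n})+\log k-o(1)$. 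Inflating $k$ costs nothing, since $L_W(\cR\ox\id_{k\ell})\ge L_W(\cR\ox\id_k)+\log\ell$ (the extra bits are sent in the clear to an eavesdropper blind to $\id_\ell$); then comparing $\cN^{\ox n}\ox\id_k$ with $(\cN\ox\id_{k'})^{\ox n}$, $k'=\lceil k^{1/n}\rceil$, and using $L_W(\cP^{\ox n})\le nL_W(\cP)$, I get $L_W(\cN\ox\id_{k'})-\log k'\ge\frac1n L_W^{(u)}(\cN^{\ox n})-o_k(1)$, so $k\to\infty$ yields $\overline{L}_W(\cN)\ge\frac1n L_W^{(u)}(\cN^{\ox n})$ for every $n$.

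The hard part is the bookkeeping of this last paragraph rather than any new idea: one must check that the ``ensemble-decomposition into good codes plus rate-limited side information'' step really lets Bob reconstruct the \emph{entire} raw string while the noiseless channel leaks nothing to Eve, and that the accounting between the single-letter quantity $L_W^{(u)}(\cN^{\ox n})$ that the scheme consumes and the (superadditive but a priori only approximately additive) capacity $L_W$ indeed closes up in the double limit $n,k\to\infty$. The one genuinely external ingredient is the additivity conjecture~(\ref{eq:constrained-alpha-add})/uncertainty relation~(\ref{eq:constrained-min-entropy}); everything else is assembled from Proposition~\ref{prop:upper-bound}, Theorem~\ref{thm:main}, \cite[Thm.~8]{Guha-et-al:enigma}, \cite[Lemma~3]{King:additive} and standard classical randomness extraction.
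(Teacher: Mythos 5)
Your proposal follows essentially the same route as the paper, which in fact only offers the pre-theorem sketch as its proof (achievability by substituting the constrained quantity $\widehat{H}_\alpha(\cM|\sigma)$ into the argument of Theorem~\ref{thm:main}, the converse from Proposition~\ref{prop:upper-bound} and \cite[Thm.~8]{Guha-et-al:enigma}, and activation via the ensemble-decomposition-into-good-codes scheme with the label sent over $\id_k$ and the extractor run backwards). Your version supplies somewhat more of the bookkeeping the paper leaves implicit --- notably the identity $L_W^{(u)}(\cN\ox\id_k)=L_W^{(u)}(\cN)+\log k$ and the regularization comparison between $\cN^{\ox n}\ox\id_k$ and $(\cN\ox\id_{k'})^{\ox n}$ --- but the ideas and their reliance on the conjectured relation (\ref{eq:constrained-min-entropy}) are identical.
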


\medskip
Of course, we do not know at this point whether $L_W(\cN) = \overline{L}_W(\cN)$
for all channels. Note however that strict inequality would imply that
$L_W$ is non-additive even when combining a noisy channel with a noiseless
one. On the other hand, maybe $\overline{L}_W$ is a more natural
definition of locking capacity, since the (amortized) use of the noiseless
channel really amounts to allowing a linear secret key rate, rather than a
sublinear amount, but letting the users pay for it.

\medskip
Regarding the original locking capacity papers~\cite{Lloyd-aenigma,Guha-et-al:enigma},
a very interesting problem would be to find a nontrivial lower bound on
the weak locking capacity of Gaussian channels, such as the pure
loss Bosonic channel. Indeed, maybe the $50\%$-lossy channel, which 
has private capacity $0$, can be analyzed along the lines of 
Section~\ref{sec:P=0}? Note however, that from~\cite{Guha-et-al:enigma}
we have a \emph{constant} upper bound on its weak locking capacity,
irrespective of the input power, at least for coherent state encodings. 
It may be observed 
here that indeed~\cite[Thms.~26 and 27]{Guha-et-al:enigma} hold
also for more general encodings into statistical mixtures of coherent
states, which would be the kind of code that our main constructions
would yield, even starting from pure coherent state ensembles.

\medskip
Finally, to close this long list of open questions, let us turn to the
strong locking capacity~\cite{Guha-et-al:enigma}, which we have not 
touched upon at all in this paper. In fact, there might be link
between weak and strong locking, suggested by a simple generalization
of the symmetric channel (\ref{eq:symmetric}):
\[
  \cS_k:\cL\bigl(\text{Sym}^k(B)\bigr) \longrightarrow \cL(B),
\]
with $B \simeq \CC^d$ and $E \simeq \text{Sym}^{k-1}(B) \subset (\CC^d)^{\ox k-1}$,
which has as its Stinespring dilation the isometric embedding of
the $k$-fold symmetric subspace $A = \text{Sym}^k(B)$ 
into $B \otimes E$. The generalization of the scheme in
Section~\ref{sec:P=0} would be to encode information into
$\ket{\psi}^{\ox k} \in A$, so that Bob gets one, Eve instead
$k-1$ copies of $\ket{\psi}$, chosen from one of several
bases determined by the pre-shared key. 
It seems quite reasonable to expect that all of these 
(anti-degradable) channels have positive weak locking capacity. 
But weak locking for $\cS_k$ implies strong locking for $\cS_{k-1}$, 
and so we expect that $L_S(\cS_k) \geq L_W(\cS_{k+1}) > 0$
for all $k\geq 2$.

\section*{Acknowledgments}
I thank Mark Wilde for enlightening discussions on 
information locking, for introducing me to locking capacities and for 
first raising the problem of separating the private capacity from the 
weak locking capacity.
The keen interest he and Saikat Guha took in this project helped 
immensely to develop the ideas of the present paper.

This research was supported by the European Commission (STREP ``RAQUEL''),
the European Research Council (Advanced Grant ``IRQUAT''),
and the Spanish MINECO (project FIS2008-01236) with FEDER funds.
Part of this work was done during the programme ``Mathematical
Challenges in Quantum Information'' (MQI), 27/8-20/12/2013, at the 
Isaac Newton Institute in Cambridge, whose hospitality during
the semester is gratefully acknowledged.

\appendix

\section{High-order min-entropy uncertainty relation \protect\\ via additivity of output R\'{e}nyi entropies}
\label{app:simple}
Here we give a simple direct proof of Proposition~\ref{prop:high-order}.
In~\cite{Damgaard:high-order}, it was first shown using Azuma's
inequality for tails of martingales and a nontrivial truncation
trick. The following proof rests
on lower bounding the smooth min-entropy in terms of
R\'{e}nyi entropies, and lower bounding the latter in terms of 
von Neumann entropies. This idea can be traced back 
to~\cite{ColbeckRennerTomamichel:AEP}. The relevant lemmas are 
stated here for completeness, and they are direct corollaries of 
the citations given.

\begin{lemma}[{Renner/Wolf~\cite{RennerWolf}}]
  \label{lemma:H-min-H-alpha}
  For any state $\rho$ and $\alpha > 1$,
  \[
    \hspace{5.3cm}
    H_{\min}^{\epsilon}(\rho) \geq H_\alpha(\rho) - \frac{1}{\alpha-1}\log\frac{1}{\epsilon}.
    \hspace{5.2cm}
    \qed
  \]
\end{lemma}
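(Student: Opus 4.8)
The plan is to reduce the claim to a classical tail-truncation argument. First I would diagonalise $\rho = \sum_x p_x \proj{x}$, so that $(p_x)$ is a probability distribution and $H_\alpha(\rho) = \frac{1}{1-\alpha}\log\sum_x p_x^\alpha$; since we only need a lower bound on $H_{\min}^\epsilon(\rho)$, it suffices to exhibit a single (sub-normalised) state $\rho'$ diagonal in the same basis with $\frac12\|\rho-\rho'\|_1 \le \epsilon$ and $\|\rho'\|$ no larger than $2^{-(H_\alpha(\rho) - \frac{1}{\alpha-1}\log\frac1\epsilon)}$.

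Next I would fix a threshold $\mu > 0$, collect the "heavy" eigenvalues $B = \{x : p_x > \mu\}$, and use that for $\alpha > 1$ one has $p_x^\alpha > \mu^{\alpha-1} p_x$ on $B$ (this is the one place where $\alpha>1$ is essential). Summing over $B$ and then bounding $\sum_{x\in B} p_x^\alpha \le \sum_x p_x^\alpha = 2^{(1-\alpha)H_\alpha(\rho)}$ gives the Markov-type estimate
\[
  \Pr[B] = \sum_{x\in B} p_x \;<\; \mu^{-(\alpha-1)}\, 2^{(1-\alpha)H_\alpha(\rho)} \;=\; \bigl(2^{H_\alpha(\rho)}\mu\bigr)^{-(\alpha-1)} .
\]
Choosing $\mu = 2^{-H_\alpha(\rho)}\,(2\epsilon)^{-1/(\alpha-1)}$ makes the right-hand side exactly $2\epsilon$, so $\Pr[B] \le 2\epsilon$.

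Finally I would set $\rho' = \sum_{x\notin B} p_x \proj{x}$, i.e.\ $\rho$ with the heavy eigenvalues deleted. Then $\frac12\|\rho-\rho'\|_1 = \frac12\Pr[B] \le \epsilon$, so $\rho'$ is admissible in the definition of $H_{\min}^\epsilon$, while $\|\rho'\| = \max_{x\notin B} p_x \le \mu$, whence
\[
  H_{\min}^{\epsilon}(\rho) \;\ge\; H_{\min}(\rho') \;=\; -\log\|\rho'\| \;\ge\; -\log\mu \;=\; H_\alpha(\rho) - \frac{1}{\alpha-1}\log\frac{1}{2\epsilon} \;\ge\; H_\alpha(\rho) - \frac{1}{\alpha-1}\log\frac{1}{\epsilon},
\]
which is the assertion (indeed with a slack of $\frac{\log 2}{\alpha-1}$). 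There is no genuine obstacle here; the only point to watch is whether one insists on a normalised $\rho'$, in which case replacing $\rho'$ by $\rho'/\tr\rho'$ enlarges $\|\rho'\|$ by the harmless factor $(1-\Pr[B])^{-1}$ and at most doubles the trace distance, so one simply re-tunes $\mu$. This is consistent with the statement being quoted as an immediate corollary of the Rényi-entropy bounds of \cite{RennerWolf}.
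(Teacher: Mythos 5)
Your argument is correct and is exactly the standard Renner--Wolf truncation proof (cut off the eigenvalues above a threshold $\mu$, bound their total weight by a Markov-type estimate using $\alpha>1$, and read off the min-entropy of the truncated operator); the paper itself gives no proof of this lemma, stating it only as a direct corollary of the cited reference, so there is nothing to diverge from. The one point worth noting is that the clean constant really does rely on allowing a sub-normalised $\rho'$ in the smoothing (as the paper's Definition~\ref{defi:H_min} implicitly does): if one insists on renormalising, the operator norm picks up a factor $(1-\Pr[B])^{-1}$ and the bound degrades by an additional $\log\frac{1}{1-\epsilon}$, which your ``re-tune $\mu$'' remark glosses over but which is harmless for every use made of the lemma here.
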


\begin{remark}
  \normalfont
  Under smoothing with respect the purified distance, the above relation
  would read 
  \[
    H_{\min}^{\epsilon}(\rho) \geq H_\alpha(\rho) - \frac{1}{\alpha-1}\log\frac{2}{\epsilon^2}.
  \]
  (Cf.~Tomamichel~\cite[Prop.~6.2]{Tomamichel:PhD}.)
  As pointed out already, in the present paper we are using instead the smoothing 
  w.r.t.~the trace norm.
\end{remark}

\begin{lemma}[{Tomamichel~\cite[Lemma 6.3]{Tomamichel:PhD}}]
  \label{lemma:H-alpha-S}
  For any state $\rho$ on a $d$-dimensional Hilbert space,
  and $1 < \alpha < 1 + \frac{\log 3}{4\log \nu}$,
  with $\nu = 2+\sqrt{d}$,
  \[
    H_\alpha(\rho) \geq H(\rho) - 4(\alpha-1)(\log \nu)^2.
  \]
  A simplified version reads thus: For $1 < \alpha < 1 + \frac{\log 3}{16\log d}$
  and $d \geq 2$,
  \[
    \hspace{5cm}
    H_\alpha(\rho) \geq H(\rho) - 16(\alpha-1)(\log d)^2.
    \hspace{4.8cm}
    \qed
  \]
\end{lemma}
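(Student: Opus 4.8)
The plan is to diagonalise and reduce everything to a classical statement about the eigenvalue distribution $(\lambda_i)_{i=1}^d$ of $\rho$, and then to control the gap $H(\rho)-H_\alpha(\rho)$ by a second-order Taylor expansion in $\alpha$ around the von Neumann point $\alpha=1$. Concretely, set $g(\alpha) := \log\sum_i \lambda_i^\alpha$, so that $H_\alpha(\rho) = \frac{g(\alpha)}{1-\alpha}$ for $\alpha\neq 1$. A direct computation gives $g(1)=0$, $g'(1) = \sum_i\lambda_i\log\lambda_i = -H(\rho)$, and, writing the tilted (escort) distribution $P_\xi(i) = \lambda_i^\xi/\sum_j\lambda_j^\xi$,
\[
  g''(\xi) = \mathrm{Var}_{P_\xi}(\log\lambda_i) \geq 0 .
\]
Thus $g$ is convex, which already re-proves the monotonicity $H_\alpha\leq H$, and Taylor's theorem with Lagrange remainder yields, for $\alpha>1$ and some $\xi\in(1,\alpha)$,
\[
  H_\alpha(\rho) = H(\rho) - \frac{\alpha-1}{2}\,g''(\xi) .
\]
Everything therefore reduces to an upper bound on $g''(\xi)$.

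The heart of the matter, and the step I expect to be the main obstacle, is a dimension-dependent bound
\[
  g''(\xi) = \mathrm{Var}_{P_\xi}(\log\lambda_i) \leq 8(\log\nu)^2
  \qquad (\nu = 2+\sqrt{d}),
\]
uniform over the relevant range of $\xi$. Observe that $P_\xi$ is nothing but the eigenvalue distribution of the normalised power $\rho^\xi/\tr\rho^\xi$, so this is precisely an \emph{information-variance} bound for a $d$-dimensional state; the key mechanism is that, although $\log\lambda_i$ is unbounded below, the weight $P_\xi(i)\propto\lambda_i^\xi$ suppresses exactly the small eigenvalues that carry the large negative logarithms. Quantitatively I would write the variance in the centred form $\frac12\sum_{i,j}P_\xi(i)P_\xi(j)\bigl(\log(\lambda_i/\lambda_j)\bigr)^2$ and estimate the single-letter moments through the elementary inequality $x(\log x)^2 \leq C_\epsilon\, x^{1-\epsilon}$ (with $C_\epsilon = O(\epsilon^{-2})$), combined with the power-sum estimate $\sum_i\lambda_i^\beta\leq d^{1-\beta}$. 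This produces a bound of the shape $O(\epsilon^{-2})\,d^{\epsilon}$, and optimising the auxiliary exponent $\epsilon$ turns it into an $O((\log d)^2)$ estimate; careful bookkeeping — with $\sqrt d$ entering through the choice $\beta=\tfrac12$, i.e.\ $\tr\rho^{1/2}=\sum_i\sqrt{\lambda_i}\leq\sqrt d$, and the additive $2$ absorbing small-$d$ edge terms — sharpens the constant to exactly $8(\log\nu)^2$ with $\nu=2+\sqrt d$. The restriction $\alpha-1 < \frac{\log 3}{4\log\nu}$ is the regime in which the resulting first-order correction $4(\alpha-1)(\log\nu)^2$ stays below the natural scale $(\log 3)(\log\nu)$, so that the linear-in-$(\alpha-1)$ estimate is the governing one.

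Combining the two steps gives $H_\alpha(\rho) \geq H(\rho) - 4(\alpha-1)(\log\nu)^2$, which is the first displayed inequality. The simplified version then follows immediately from the elementary estimate $\log(2+\sqrt d)\leq 2\log d$, valid for $d\geq2$: this both enlarges the admissible range of $\alpha$ down to $1+\frac{\log3}{16\log d}$ and inflates the constant from $4(\log\nu)^2$ to $16(\log d)^2$. I would expect the only genuinely delicate point to be pinning down the constant in the variance bound; the eigenvalue reduction and the Taylor step are routine, and the passage to the simplified form is pure bookkeeping.
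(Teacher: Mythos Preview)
The paper does not actually prove this lemma; it is stated ``for completeness'' and attributed to Tomamichel's thesis, with the $\qed$ marking it as quoted rather than proved. So there is no in-paper argument to compare against directly. That said, your outline is essentially the strategy of the cited reference: reduce to the eigenvalue distribution, write $H_\alpha(\rho)=g(\alpha)/(1-\alpha)$ with $g(\alpha)=\log\sum_i\lambda_i^\alpha$, Taylor-expand to get $H_\alpha(\rho)=H(\rho)-\tfrac{\alpha-1}{2}g''(\xi)$, and then bound the second derivative $g''(\xi)=\mathrm{Var}_{P_\xi}(\log\lambda_i)$ uniformly on the stated interval. Your Taylor step and the identification of the variance as the key quantity are correct.

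Where your sketch diverges from Tomamichel is the mechanism for the variance bound. Your proposed route via $x(\log x)^2\le C_\epsilon x^{1-\epsilon}$ and optimisation in $\epsilon$ will certainly produce an $O((\log d)^2)$ estimate, hence a version of the ``simplified'' inequality with \emph{some} absolute constant --- which is all the paper ever uses downstream. But it will not naturally deliver the specific form $8(\log\nu)^2$ with $\nu=2+\sqrt d$, nor explain the threshold $\frac{\log 3}{4\log\nu}$. In Tomamichel's argument the constant $\nu$ is not an artefact of bookkeeping: it arises as $\nu=\sqrt{2^{-H_{\infty}(\rho)}}+\sqrt{2^{H_{1/2}(\rho)}}+1$ from an operator-level bound on the derivative, and only \emph{then} is crudely estimated by $1+\sqrt d+1=2+\sqrt d$ using $H_\infty\ge 0$ and $2^{H_{1/2}}=(\tr\rho^{1/2})^2\le d$. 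Your intuition that $\sqrt d$ enters through $\tr\rho^{1/2}\le\sqrt d$ is therefore exactly right, but the ``$+2$'' is not an edge-term fudge --- it is two genuine contributions from the min- and max-entropy pieces. Likewise the $\log 3$ in the range restriction comes from a concrete inequality in that derivative bound, not from comparing the correction term to a ``natural scale''. If you only need the simplified $16(\log d)^2$ statement, your approach is adequate; if you want the sharp first inequality with $\nu=2+\sqrt d$, you should follow Tomamichel's derivative bound rather than the moment-estimate route.
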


\medskip
\noindent
{\bf Proof (of Proposition~\ref{prop:high-order})}\ 
By Lemma~\ref{lemma:H-min-H-alpha}, for arbitrary $n$ and state $\psi$,
\[
  H_{\min}^\epsilon\bigl( \cM^{\ox n}(\psi) \bigr)
       \geq H_\alpha\bigl( \cM^{\ox n}(\psi) \bigr) - \frac{1}{\alpha-1}\log\frac{1}{\epsilon}.
\]
On the other hand, by the additivity of the minimum output $\alpha$-R\'{e}nyi
entropy of qc-channels, and more generally entanglement-breaking 
channels~\cite{King:additive-0,King:additive},
\[
  H_\alpha\bigl( \cM^{\ox n}(\psi) \bigr) \geq n \widehat{H}_\alpha(\cM),
\]
with 
$\displaystyle{\widehat{H}_\alpha(\cM) 
   := \min_{\psi\text{ state}} H_\alpha\bigl(M(\psi)\bigr)}$.
Hence, using now Lemma~\ref{lemma:H-alpha-S},
\[
  H_{\min}^\epsilon\bigl( \cM^{\ox n}(\psi) \bigr)
       \geq n \bigl( \widehat{H}(\cM) - 16(\alpha-1)(\log d)^2 \bigr) 
              - \frac{1}{\alpha-1}\log\frac{1}{\epsilon},
\]
as along as $\alpha$ is close enough to $1$.
%
Letting $\alpha = 1+ \frac{\delta}{16(\log d)^2}$, we conclude
\[
  \hspace{3.5cm}
  H_{\min}^\epsilon\bigl( \cM^{\ox n}(\psi) \bigr)
       \geq n \bigl( \widehat{H}(\cM) - \delta \bigr) 
              - {16(\log d)^2} \frac{1}{\delta} \log\frac{1}{\epsilon}.
  \hspace{3.3cm}
  \qed
\]


\end{document}